%% file: main.tex
\documentclass[11pt]{amsart}
\usepackage{preamble}

\allowdisplaybreaks

\title{On the Complexity of Counting Orderings In Graphs}
\author{Marcelo Arenas$^{1,2}$ and Mar\'ia Alejandra Schild$^{1,2}$}
\address{$^1$ Universidad Católica de Chile}
\address{$^2$ Instituto Milenio Fundamentos de los Datos, Chile}
\author{Bernardo Subercaseaux$^3$}
\address{$^3$ Carnegie Mellon University}

\newcommand{\ns}{s}
\newcommand{\nou}{\tilde{u}}
\newcommand{\omsh}{\Omega_{\mathsf{sh}}}
\newcommand{\omxc}{\Omega_{\mathsf{xc}}}
\newcommand{\omle}{\Omega_{\mathsf{le}}}

\email{marenas@uc.cl, aleschildz@uc.cl, bersub@cmu.edu}

\begin{document}
\begin{abstract}
    We study the computational complexity of several counting problems on graphs. Each of them consists of counting orderings of the vertices or edges with adjacency constraints.
    We show $\sharpP$-completeness for all of them via a common new technique. Given a counting function $C$ of interest, we define a parameterized family of instances $G_q$,
    where the parameter $q$ controls the amplification of a simple gadget.
    After multiplying by an explicit factor $f(q)$, we show that the values of $f(q) \cdot C(G_q)$, for positive integers $q$, agree with a rational function in $q$ whose numerator and denominator can be interpolated in polynomial time. We then recover a $\sharpP$-hard function by evaluating this rational function symbolically at a limiting value $L \in \mathbb{Q} \cup \{ \infty, -\infty\}$. With this methodology, we show $\sharpP$-completeness for the following counting problems: (a) successive vertex orderings of bipartite graphs, (b) st-numberings of graphs, (c) shellings of bipartite graphs, (d) linear extensions of N-free posets of height $3$, and (e) linear extensions of posets of height $2$.
    Result (d) settles a conjecture of Felsner and Manneville (2015).
    Although result (e) was first proved by Dittmer and Pak (2018), we include an alternative proof, using our technique, that does not rely on the result of Brightwell and Winkler (1991) about hardness of counting for general posets.

\end{abstract}
\maketitle

\section{Introduction}


Many combinatorial structures can be constructed by adding their elements one at a time, subject only to local incidence constraints.
In this paper, we show that several counting problems about such orderings are intractable, even for restricted classes of graphs. We cover the four natural variants obtained by independently choosing whether the objects to be ordered are vertices or edges, and whether the underlying graph is undirected or directed. We prove $\#\mathsf{P}$-hardness under Turing reductions in all four settings. Since in all these problems it is possible to verify in polynomial time whether some permutation is valid, membership in the class $\#\mathsf{P}$ is immediate, so we indeed obtain $\#\mathsf{P}$-completeness for all of them.
All our hardness proofs follow a common interpolation scheme, which is described in \Cref{sec:strategy}. The main idea is to use oracle evaluations on amplified instances to determine a rational function, from which a hard-to-count quantity can be recovered as a limiting value that can be computed symbolically. To the best of our knowledge, this method provides a novel application of rational interpolation in counting complexity.

We start by defining the problems.
%
    Given an undirected graph $G = (V, E)$ on $n$ vertices, a \defn{successive vertex ordering} (SVO) of $G$ is an ordering $(v_1, \dots, v_n)$ of its vertices such that for each $2 \leq i \leq n$, vertex $v_i$ is adjacent to some $v_j$ with $j < i$. 
Equivalently, an ordering $(v_1, \dots, v_n)$ is an SVO if the induced subgraph $G[\{v_1, \ldots, v_i\}]$ is connected for every $1 \leq i \leq n$. 
Counting successive vertex orderings has been studied both for specific classes of graphs via explicit formulas~\cite{FANG2023105776} and for general graphs, where an $O(n2^n)$ counting algorithm was recently devised by Agrawal, Erturk, and Louis~\cite{agrawal2026successivevertexorderingsconnected}.

Our first result is that we should not expect much more efficient algorithms.

\begin{thm}\label{thm:main}
    $\#\mathsf{SVO}$ is $\#\mathsf{P}$-complete, with hardness already for bipartite graphs.
\end{thm}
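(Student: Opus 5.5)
Membership in $\#\mathsf{P}$ is immediate, since checking whether a permutation is an SVO takes polynomial time. For hardness we reduce, via the interpolation scheme sketched in the abstract, from a known $\#\mathsf{P}$-hard problem. The natural source is counting linear extensions of posets of height~2 (Dittmer--Pak), or a close relative such as counting linear extensions of a bipartite poset $P=(A\cup B,\le)$ with minimal elements $A$ and maximal elements $B$. The intuition is that in an SVO of a bipartite graph, a vertex $b\in B$ becomes placeable once at least one neighbour has appeared. Linear extensions instead require \emph{all} predecessors to appear first. So we need a gadget converting an ``exists a neighbour'' constraint into a ``for all'' constraint, at least in a limiting regime, and amplification with a parameter $q$ is the tool for this.

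\textbf{Construction.} Given $P$, build a bipartite graph $G_q$ as follows.
\begin{itemize}
\item Add a hub vertex $r$ adjacent to every $a\in A$, so the elements of $A$ are always available.
\item For each $b\in B$ and each $a<b$, replace the direct constraint by a gadget. For instance, attach to each $b$ a large independent set of $q$ pendant copies $b^{(1)},\dots,b^{(q)}$, where each copy is adjacent to all predecessors of $b$ (subdividing edges as needed to keep bipartiteness).
\end{itemize}
The idea is that copies of $b$ can be released as soon as any predecessor appears. Then, in the $q\to\infty$ limit, weighted by a normalising factor $f(q)$ such as $\prod (q!)^{-1}$ times binomial corrections, the dominant contributions come from orderings in which the block of $b$'s copies is placed after all of its predecessors.

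One could alternatively force the ``all predecessors'' behaviour by making $b$ adjacent only to a chain through its predecessors. The plan is to fix whichever gadget makes the count a clean sum over ``release patterns''.

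\textbf{Key steps.}
\begin{enumerate}
\item Write $C(G_q)$ as a sum over the relative orders of the original elements together with the times at which each gadget becomes enabled. Each summand counts interleavings of $q$ indistinguishable-up-to-labelling copies into the remaining slots. This is a product of ratios of factorials, hence, after multiplying by $f(q)$, a rational function of $q$ whose numerator and denominator degrees are polynomial in $n$.
\item Show that these degrees are bounded by an explicit polynomial $d(n)$. Then query the oracle at $q=1,\dots,2d+1$ and recover the rational function by rational (Cauchy/Padé-type) interpolation in polynomial time. This needs bit sizes polynomial, which holds since the values are at most $(nq)!$.
\item Evaluate the recovered rational function symbolically at the limit $L$, either $q\to\infty$ or $q\to 0$ after an appropriate substitution. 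Show that the limit equals $e(P)$, the number of linear extensions, times an explicit, easily computed constant.
\end{enumerate}

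\textbf{Main obstacle.} The main difficulty is step~3: proving that the limit isolates exactly the linear extensions. Every ordering that violates a constraint, meaning some $b$ is enabled before one of its predecessors, must contribute a term of strictly lower order in $q$ after normalisation, while valid orderings must all contribute the same leading coefficient. I would first try to get the exponent bookkeeping right by computing, for a fixed relative order, the probability that the gadget for $b$ behaves ``late". The target is to show it is $\Theta(1)$ when all predecessors precede $b$ and $O(1/q)$ otherwise. Secondary care is needed to ensure the rational function is genuinely determined by polynomially many integer evaluations, including ruling out cancellation issues, by bounding degrees uniformly.
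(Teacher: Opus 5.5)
\textbf{There is a genuine gap.} Step 3 does not just need care: with your gadget, no limit isolates linear extensions. Each copy of $b$ is adjacent to all predecessors of $b$, so its SVO constraint is ``\emph{some} predecessor has appeared'', and amplification rewards \emph{early} release.

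To see this, fix the order $\tau$ of $A$. The SVOs rooted at $r$ that induce $\tau$ are the linear extensions of a tree poset: the chain $\tau$, with every copy of $b$ hanging below the \emph{earliest} predecessor of $b$ in $\tau$. By Lemma~\ref{lemma:tree-le}, their number divided by $(|A|+q|B|)!$ is $\prod_{i=1}^{|A|} 1/(i+qC_{i,\tau})$. Here $C_{i,\tau}$ counts the $b$ whose earliest predecessor lies among the last $i$ elements of $\tau$.
\begin{itemize}
\item This product is of order $q^{-k(\tau)}$, where $k(\tau)$ is the length of the shortest prefix of $\tau$ meeting the predecessor set of every $b$. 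So the leading behaviour as $q\to\infty$ counts minimum covers of $B$.
\item The whole rational function depends only on earliest-predecessor statistics. By contrast, $e(P)$ is governed by last-predecessor statistics (compare Lemma~\ref{lemma:le-formula}).
\item A small example shows the effect. Take $A=\{a_1,a_2\}$, $B=\{b\}$, $a_1,a_2<b$ and $\tau=(a_1,a_2)$. There are $(q+1)!$ such SVOs, and only $q!$ of them place every copy of $b$ after $a_2$. The ``late'' behaviour thus has probability $1/(q+1)$: amplification suppresses exactly the orderings you want to isolate.
\end{itemize}
Your ``chain'' alternative does not fix this. In an SVO every adjacency is a ``some neighbour'' constraint, and a chain can be entered from any of its vertices, so a conjunctive gadget is precisely what SVOs cannot express directly.

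\textbf{The missing idea} is to choose a source problem whose structure is itself disjunctive. Your graph is exactly the paper's $G_{I,q}$ when $P$ is the incidence poset of an $\mathsf{X3C}$ instance (sets in $A$, elements in $B$), and the paper reduces from $\#\mathsf{X3C}$.
\begin{itemize}
\item $q^h\chi_\sigma(q)\to 3^{-h}/(h!(n-h)!)$ when the first $h$ sets of $\sigma$ cover $U$, and $q^h\chi_\sigma(q)\to 0$ otherwise. Since $|U|=3h$, covering $U$ with $h$ sets forces an exact cover.
\item Hence $\lim_{q\to\infty}q^h\sum_\sigma\chi_\sigma(q)=3^{-h}\,\#\mathsf{X3C}(I)$.
\item Every $1/\chi_\sigma$ divides the explicit $D(q)=\prod_{C=0}^{|U|}\prod_{\ell=1}^{n}(Cq+\ell)^n$. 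So the paper needs only ordinary polynomial interpolation of $R=D\cdot\sum_\sigma\chi_\sigma$, not rational interpolation.
\end{itemize}

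Your plan also omits the passage from rooted to unrooted counts. The oracle counts all SVOs, including those starting at a copy vertex or at an element of $A$. The paper handles this by attaching $\ell$ pendant leaves to $r$ and reading $\#\mathsf{SVO}(G;r)$ off the leading coefficient of a polynomial in $\ell$.

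Finally, no subdivisions are needed for bipartiteness here, and subdividing edges would change the counts.
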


The notion of SVOs is tightly related to that of \defn{shellings} of a graph: orderings $(e_1, \dots, e_m)$ of the edges such that for each $i > 1$, the edge $e_i$ intersects some prior edge $e_j$ with $j < i$.
Indeed, it is easy to see that each shelling of a graph $G$ is in natural correspondence with an SVO of the line graph of $G$. Once again, the number of shellings has been studied for specific classes such as trees and complete bipartite graphs~\cite{gaoCountingShellingsComplete2021}. Further results on shellings of graphs were presented in the master's thesis of the second author~\cite{masterAle}. We answer the main question posed in that thesis.

\begin{thm}\label{thm:shellings}
$\#\mathsf{Shellings}$ is $\#\mathsf{P}$-complete, with hardness already for bipartite graphs.
\end{thm}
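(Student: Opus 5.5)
The plan is to reduce from \Cref{thm:main} using the interpolation scheme of \Cref{sec:strategy}. Membership in $\sharpP$ is immediate, since validity of an edge ordering is checkable in polynomial time. For hardness, start from a bipartite graph $H=(V,E)$ with $n$ vertices and $m$ edges, whose number of SVOs is $\sharpP$-hard to compute. Build a family of bipartite graphs $G_q$ as follows: subdivide every edge $uv\in E$ by a new vertex $w_{uv}$, and attach $q$ pendant edges to every original vertex $v\in V$. Subdivision keeps the graph bipartite, with original vertices on one side and subdivision vertices on the other. The pendant leaves can be placed on the side opposite to $v$, so each $G_q$ is bipartite.

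\emph{Shellings of $G_q$ versus activation orders.} Call an original vertex $v$ \emph{activated} at the first moment an edge incident to $v$ appears in the shelling. The first edge of a shelling activates either one original vertex (a pendant edge or one half of a subdivided edge) or, via a half-edge $u w_{uv}$, the vertex $u$ while preparing $v$. Afterwards, a new original vertex $v$ can only be activated through an edge $w_{uv}v$, and this requires $u w_{uv}$ to be already present. Hence $u$ must be activated first. It follows that the order in which original vertices become activated is always an SVO of $H$. The next step is to fix an SVO $\sigma$ of $H$ and count the shellings inducing activation order $\sigma$. Once $v$ is activated, its $q$ pendant edges are free to be placed anywhere later. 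I would count by the standard method of interleaving: the pendant edges of $v$ form a block of $q$ labelled items that must all come after the activation time of $v$. This yields a product of rising-factorial-type ratios. After normalizing by $f(q)=1/(\text{total number of ways to interleave all } nq \text{ pendant edges})$, or a similar explicit factor, each term becomes a rational function of $q$ of bounded degree. Its dependence on $\sigma$ is only through the combinatorics of the subdivision edges, which are $O(m)$ items.

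\emph{Interpolation and limit.} Summing over $\sigma$ and over the placements of the $2m$ subdivision edges shows that $f(q)\cdot\#\mathsf{Shellings}(G_q)$ agrees, for all positive integers $q$, with a rational function $P(q)/Q(q)$ with $\deg P,\deg Q=\mathrm{poly}(n,m)$. Querying the oracle on $G_q$ for polynomially many values of $q$ then determines $P$ and $Q$ by rational interpolation. For the limit, the key heuristic is that as $q\to\infty$ the pendant edges dominate. Between two consecutive activations, the process of adding subdivision edges becomes, after normalization, independent of the detailed interleaving, except for a factor that depends only on the position in $\sigma$ and not on $\sigma$ itself. If this holds, the limit $\lim_{q\to\infty}P(q)/Q(q)$, computed symbolically from the leading coefficients, equals $c(n,m)\cdot\#\mathsf{SVO}(H)$ for an explicit computable $c(n,m)\neq 0$. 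This recovers $\#\mathsf{SVO}(H)$.

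\emph{Main obstacle.} The delicate step is making the limiting argument exact. One must show that in the limit every SVO $\sigma$ receives the \emph{same} weight, rather than a weight that depends on, for example, how many subdivision half-edges are available at each stage. Configurations where half-edges $uw_{uv}$ are placed early without activating $v$ also contribute and could break uniformity. My first attempt would be a second amplification: replace each subdivided edge by $q'$ parallel subdivided paths, or tune the number of pendant edges per vertex. Then the dominant term in $q$ is attained only by shellings in which each new vertex is activated ``immediately'' after an enabling half-edge appears, and these are in a uniform, explicitly countable bijection with the SVOs of $H$. If a single parameter does not suffice, I would take an iterated limit over two parameters, interpolating in each in turn.
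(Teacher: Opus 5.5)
\textbf{There is a genuine gap, and it is at the step you flag as the main obstacle: the uniformity you need is false, not just unproven.} Carry out the computation you sketch: fix the order $\tau$ of the $2m$ half-edges, let the pendant edges of $v$ hang below the half-edge that activates $v$, and apply \Cref{lemma:tree-le}. Then $\#\mathsf{Shellings}(G_q)/(2m+nq)!$ has order $q^{-(2n-2)}$. The leading coefficient comes exactly from shellings whose first $2n-2$ half-edges form a subdivided spanning tree of $H$, together with shellings that start with a pendant edge, which contribute at the same order.

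Such a prefix gets weight $\prod_p 1/a_p$, where $a_p$ is the number of original vertices not yet activated before step $p$. This weight depends on how many enabling half-edges are inserted between consecutive activations. The number of prefixes inducing a given SVO also depends on which earlier neighbour enables each vertex. So the limit is a weighted count $\sum_\sigma w_H(\sigma)$ with non-constant weights. For $H=K_{1,3}$, each SVO starting at the centre has limiting weight $1/6$ and each SVO starting at a leaf has weight $1/18$, so
\[
\lim_{q\to\infty} q^{6}\,\frac{\#\mathsf{Shellings}(G_q)}{(6+4q)!}=\frac{4}{3}.
\]
For $H=P_4$, which is also bipartite with $n=4$ and $m=3$, the same limit is $5/9$. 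Since $\#\mathsf{SVO}(K_{1,3})=12$, $\#\mathsf{SVO}(P_4)=8$, and $\frac{4/3}{12}\neq\frac{5/9}{8}$, no constant $c(n,m)$ exists.

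Your proposed repair does not fix this either. Suppose a second amplification forced every vertex to be activated right after its enabling half-edge. The number of such configurations inducing $\sigma=(v_1,\dots,v_n)$ would still carry the factor $\prod_{k\ge 2} d_k(\sigma)$, where $d_k(\sigma)$ is the number of neighbours of $v_k$ among $v_1,\dots,v_{k-1}$. This factor varies across SVOs of one graph: in $K_{2,3}$ with parts $\{x,y\}$ and $\{a,b,c\}$, the SVO $(x,a,y,b,c)$ gives $4$ and $(x,a,b,c,y)$ gives $3$. Moreover, nothing in $G_q$ penalises early enabling half-edges by a power of $q$; they only cost constant factors $1/a_p$.

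The paper avoids all of this by reducing directly from $\#\mathsf{X3C}$, first in the rooted setting and then unrooting with pendant leaves at $r$. There the leading configurations are shellings whose first $4h$ edges span an exact cover. All these subgraphs are isomorphic: a root with $h$ disjoint three-leaf stars. Hence the weighted sum over their orderings is the same constant $(4n-4h)!\,3^{-h}$ for every cover (\Cref{claim:omega-xc}, proved via the random process in the appendix). In your construction the leading subgraphs are spanning trees of $H$ of varying shapes, and that is exactly where uniformity is lost.
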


In 2018, Goaoc, Pat\'ak, Pat\'akov\'a, Tancer, and Wagner settled a long-standing open question by showing that deciding whether a $d$-dimensional simplicial complex is ``shellable'' is $\mathsf{NP}$-complete for $d \geq 2$~\cite{Shellability}. Since $1$-dimensional simplicial complexes are just graphs,~\Cref{thm:shellings} shows that while the decision problem is easy for $d=1$, its counting variant is already hard. The $\#\mathsf{P}$-hardness automatically propagates to the higher-dimensional variants, as we can lift the dimension of a simplicial complex by taking a cone over a new vertex while keeping its number of shellings unchanged. We note that the proof of $\mathsf{NP}$-completeness presented in~\cite{Shellability} does not provide $\#\mathsf{P}$-hardness directly, as the reduction uses existential arguments that are not counting-preserving. 



As an immediate consequence of our proof for~\Cref{thm:main}, we obtain $\sharpP$-hardness for counting \defn{$st$-numberings} of a graph, a well-studied notion with applications to, e.g., computational geometry~\cite{Papamanthou}. Given a graph $G$ and vertices $s, t \in V(G)$, an $st$-numbering of $G$ is an ordering $(v_1, v_2, \dots, v_{n-1}, v_n)$ of its vertices such that $v_1 = s$, $v_n = t$, and for each $2 \leq i \leq n-1$ vertex $v_i$ is both adjacent to some $v_j$ with $j < i$ and to some $v_k$ with $k > i$.

\begin{corollary}\label{cor:st-numberings}
    Given $G$ and vertices $s$ and $t$, it is $\#\mathsf{P}$-complete to count the $st$-numberings of $G$.
\end{corollary}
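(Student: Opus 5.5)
Membership in $\sharpP$ is immediate, since checking whether a given ordering is an $st$-numbering takes polynomial time. For hardness, I will reduce from $\#\mathsf{SVO}$ on bipartite graphs, which is hard by \Cref{thm:main}. A cleaner option is to reuse the amplified family from that proof, as the corollary is phrased as a consequence of the proof rather than of the theorem itself. The idea is to attach pendant structure so that $st$-numberings of the new graph are in bijection, up to a known factor, with SVOs of the instances used there.

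\textbf{Step 1: handle the end vertex $t$.} Given a connected graph $G$ and a vertex $s$, form $G'$ by adding a new vertex $t$ adjacent to every vertex of $G$. In an $st$-numbering of $G'$, every internal vertex $v_i$ with $i\le n-1$ automatically has the later neighbour $t$. So the only remaining constraint is that each $v_i$ has an earlier neighbour inside $G$, because $t$ comes last. Hence $st$-numberings of $(G',s,t)$ correspond exactly to SVOs of $G$ that start at $s$.

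\textbf{Step 2: drop the fixed start.} To count SVOs with an arbitrary first vertex, add a new vertex $s$ adjacent to all vertices of $G$ (and to $t$ or not; it does not matter, since $t$ is last). An $st$-numbering then consists of $s$, followed by an ordering of $V(G)$, followed by $t$. The first vertex of $G$ is adjacent to $s$, and every later vertex of $G$ must have an earlier neighbour. The trouble is that this earlier neighbour may be $s$, so the count is too large. I therefore make $s$ adjacent to a single new pendant-like vertex $r$ instead.

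The simplest alternative is to sum over starting vertices. The number of SVOs of $G$ equals $\sum_{v\in V(G)} N_v$, where $N_v$ is the number of $st$-numberings of $G'$ with $s:=v$ and $t$ universal. This is $n$ oracle calls, which is a valid Turing reduction.

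\textbf{Step 3: bipartiteness.} Adding the universal vertex $t$ destroys bipartiteness. The corollary makes no bipartiteness claim, though, so this is fine. The summation in Step 2 avoids any subtlety about $s$.

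The main obstacle is checking that in Step 1 the endpoint conditions do not interfere. Specifically, $v_1=s$ needs no earlier neighbour and $t$ needs no later one. Also, SVOs of $G$ starting at $s$ must automatically satisfy the ``earlier neighbour'' condition for all non-first vertices, which is exactly the definition. Both checks are routine, so the whole corollary reduces to \Cref{thm:main} via $|V(G)|$ oracle calls.
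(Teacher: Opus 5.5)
Your proof is correct, and its core is exactly the paper's argument: adding a universal vertex $t$ makes the $st$-numberings of the new graph coincide with the SVOs of $G$ rooted at $s$. The only difference is that the paper applies this directly to the instances $G_{I,q}$ with root $r$, whose rooted counts are already hard by \Cref{thm:main2}, so your Step~2 (summing over the $|V(G)|$ possible roots to recover unrooted counts, plus the abandoned pendant-vertex idea) is valid but unnecessary.
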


Finally, we turn our attention to the well-studied problem of counting \defn{linear extensions} of posets. A linear extension of a poset $(P, \preceq)$ is a total ordering $\leq$ of $P$ such that whenever $p_i \preceq p_j$, it follows that $p_i \leq p_j$. Viewing the poset as a directed graph $D = (V, E)$, a linear extension is an ordering $(v_1, \dots, v_n)$ of its vertices such that for every $v_i$, all the in-neighbors of $v_i$ appear before $v_i$.
The $\sharpP$-hardness of counting linear extensions for general posets was proved by Brightwell and Winkler~\cite{BrightwellWinkler}, who left as an open problem whether hardness held already for posets of height $2$. This long-standing question was finally settled in the affirmative by Dittmer and Pak~\cite{Dittmer_Pak_2020}.

Our proof technique yields two further consequences for this problem. First, it gives an alternative proof of $\sharpP$-hardness for posets of height $2$. The core of the argument is a duality between rooted SVOs and linear extensions: in an SVO, for a vertex $v$ to appear in a given position $i > 1$, it is required that \emph{some} neighbor of $v$ appears before position $i$, whereas for linear extensions it is required that \emph{all} in-neighbors of $v$ appear before position $i$. An illustration of this duality is presented in~\Cref{fig:difference}---\Cref{subfig:diff-svo} presents an undirected graph over which we want to count SVOs that start at root vertex $r$; there, vertex $c$ acts as an \emph{or} condition over the vertices $\{a, b\}$, since placing either of them allows placing $c$ next. In contrast,~\Cref{subfig:diff-h2le} shows the Hasse diagram of a poset, where vertex $c$ acts as an \emph{and}: both $a$ and $b$ must be placed before $c$ is available.

\begin{figure}
    \begin{subfigure}{0.49\linewidth}
      \centering
        \input{figures/svo_vs_h2le}
        \caption{}\label{subfig:diff-svo}
    \end{subfigure}
    \begin{subfigure}{0.49\linewidth}
      \centering
        \input{figures/svo_vs_h2le2}
        \caption{}\label{subfig:diff-h2le}
    \end{subfigure}
    \caption{Illustration of the duality between \emph{rooted} $\mathsf{SVO}$ and $\mathsf{LE}$: $(r, a,c,b)$ is a valid SVO, but $(a,c,b)$ is not a valid linear extension.}
    \label{fig:difference}
\end{figure}

Second, we resolve a conjecture of Felsner and Manneville~\cite{FelsnerManneville} by proving $\sharpP$-hardness for counting linear extensions of \defn{N-free} posets of height $3$ (a poset is said to be N-free if its Hasse diagram does not contain the `N' pattern displayed in~\Cref{fig:N-pattern} as an induced subgraph). This conjecture was also recently recorded as open in the survey of Chan and Pak~\cite{LE_survery}. 
\begin{figure}
    \centering
\begin{tikzpicture}
\node[draw, circle, fill=black, inner sep=2pt] (a) at (0,0) {};

\node[draw, circle, fill=black, inner sep=2pt] (b) at (1,0) {};

\node[draw, circle, fill=black, inner sep=2pt] (c) at (0,-1) {};

\node[draw, circle, fill=black, inner sep=2pt] (d) at (1,-1) {};

\draw[-Latex] (d) -- (b);
\draw[-Latex] (d) -- (a);
\draw[-Latex] (c) -- (a);

\end{tikzpicture}
    \caption{The `N' pattern.}
    \label{fig:N-pattern}
\end{figure}

There are several characterizations of N-free posets.
To apply our technique to this problem, we will use the following graph-theoretic formulation. Given a directed acyclic graph $A$, its \defn{edge poset} $P_E(A)$ has one element for each edge of $A$, and for every pair of edges $e_1, e_2$, we set $e_1 \preceq e_2$ whenever $A$ contains a directed path whose first edge is $e_1$ and whose last edge is $e_2$. Thus, a linear extension of $P_E(A)$ is an ordering of the edges of $A$ that is consistent with their precedence along directed paths. In this case, the height of the poset $P_E(A)$ corresponds to the maximum number of edges in a directed path of $A$. A classical characterization states that a poset $P$ is N-free if and only if $P = P_E(A)$ for some directed acyclic graph $A$~\cite{FelsnerManneville}.

\begin{thm}\label{thm:nfree}
Given an N-free poset of height $3$, it is $\#\mathsf{P}$-complete to count its number of linear extensions.
\end{thm}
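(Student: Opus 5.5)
Membership in $\sharpP$ is immediate, since checking that an ordering of the edges of $A$ is a linear extension of $P_E(A)$ takes polynomial time. For hardness we work in the edge-poset formulation. We reduce from a known $\sharpP$-hard quantity via the interpolation scheme of \Cref{sec:strategy}. The natural source is the \emph{and}-dual of rooted SVO counting, namely linear extensions of height-$2$ posets: the alternative proof announced in the introduction provides a $\sharpP$-hard count with an \emph{and}-type local constraint. So let $P$ be a height-$2$ poset, with minimal elements $X$ and maximal elements $Y$. We would like a DAG $A$ whose edge poset has height $3$ and encodes $P$. Since $P$ itself need not be N-free, the N-patterns must be broken by routing the relations through middle edges, at the cost of extra relations that the amplification and limit will suppress.

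The construction goes as follows. For each $x \in X$, create an edge $e_x = (a_x, b_x)$. For each $y \in Y$, create an edge $f_y = (c_y, d_y)$. For each covering pair $x \prec y$, add $q$ parallel-free connector paths of length one from $b_x$ to $c_y$; subdivided as necessary, these give ``middle'' edges $g^{(i)}_{xy}$ with $e_x \prec g^{(i)}_{xy} \prec f_y$. Directed paths then have at most $3$ edges, so the height is $3$. By the characterization of~\cite{FelsnerManneville}, $P_E(A_q)$ is N-free automatically. The spurious consequences are relations of the form $e_x \prec f_{y'}$ whenever $b_x$ reaches $c_{y'}$. These do not arise here, because $b_x$ reaches $c_{y'}$ only through the connector edges of $x \prec y'$. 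However, the edge poset also forces $e_x$ to precede \emph{every} connector leaving $b_x$, and forces $f_y$ to follow every connector entering $c_y$. The middle edges therefore act as a $q$-fold amplified set of interleaved free elements. If the endpoints are shared ($b_x$ is common to all of $x$'s out-connectors), then $f_y$ requires all $e_x$ with $x \prec y$, which is exactly the \emph{and} constraint of $P$. So we get $e_x \prec f_y$ iff $x \prec y$, and the only discrepancy from $P$ is the presence of the $q\,|E(P)|$ middle elements.

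The counting step writes $C(A_q)=\#\mathrm{LE}(P_E(A_q))$ as a sum over linear extensions $\sigma$ of $P$ (restricted to the $e$ and $f$ edges) of the number of ways to insert the middle edges. A middle edge for $x \prec y$ must lie between the positions of $e_x$ and $f_y$ in $\sigma$. After normalizing by $f(q)=1/(q\,|E(P)|)!$ times suitable factorials, this becomes a function of $q$. The plan is to show that this function is rational in $q$ with polynomially bounded degrees; one route is to parametrize by gap sizes and obtain products of rising factorials, $\prod (N_\sigma(q))$. We then interpolate from oracle values on polynomially many $q$. As $q\to\infty$, the count is dominated by the orderings that place the $e$'s early and the $f$'s late, weighted by the number of gaps. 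This is not yet $\#\mathrm{LE}(P)$, so we modify the approach: replace the simple connectors with gadgets whose contribution, at a suitable limit $L$, weights each $\sigma$ equally. Concretely, give each $e_x$ a long private chain-free tail of $q$ pendant edges hanging off $a_x$ (in-edges to $a_x$, which have height $\le 3$ overall), and give each $f_y$ similar pendants out of $d_y$. These pendants are independent except that they must precede $e_x$ or follow $f_y$. The ratio $C(A_q)/\bigl(\text{explicit factorial}\bigr)$ is then a rational function in $q$, and its leading coefficient at $q=\infty$ counts the linear extensions of $P$ with a uniform weight.

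The main obstacle is this last step: designing the amplification so that the limiting coefficient is exactly $\#\mathrm{LE}(P)$ up to a known factor, rather than a position-weighted sum. We must also keep height $3$ while the pendants do not create new N-free-violating relations (automatic here) or unwanted relations between distinct $x$ and $y$. I would first try the simplest version, with middle connectors only and the limit $q\to\infty$, compute the expansion of the insertion count for a fixed $\sigma$, and check whether the top-degree coefficient is independent of $\sigma$. If it is not, I would add the pendant amplification and instead take the limit $q \to -1$ or $q\to 0$ symbolically, where the rising-factorial factors degenerate in a $\sigma$-independent way.
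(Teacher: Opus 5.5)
There is a genuine gap, and it sits where you put your ``main obstacle'': you never produce a function of $q$ that is provably rational with polynomially bounded degree, let alone one whose value at some limit is $\#\mathrm{LE}(P)$. In the paper the interpolation works because, once the order $\sigma$ of the non-amplified skeleton edges is fixed, each amplified element only has to come after a single skeleton element. The conditioned poset is then a rooted tree, and \Cref{lemma:tree-le} gives each summand as $\prod_i (i+qC_{i,\sigma})^{-1}$. Your middle edges satisfy $e_x \prec g^{(i)}_{xy} \prec f_y$, so they are bounded on both sides. The conditioned poset is therefore not a forest, and this structure is lost.

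Concretely, take $P$ to be two disjoint covers $x_1\prec y_1$, $x_2\prec y_2$, with skeleton order $e_{x_1},e_{x_2},f_{y_1},f_{y_2}$. The number of insertions is $(q!)^2\bigl(\binom{2q+2}{q+1}-1\bigr)$. Divided by $(2q+4)!$ this equals $\frac{1}{(q+1)^2(2q+3)(2q+4)}-\frac{(q!)^2}{(2q+4)!}$. The second term decays exponentially, so the per-$\sigma$ quantity is not rational. You give no reason why the sum over $\sigma$ should become rational in general, and without that, finitely many oracle values determine nothing.

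The proposed repair does not help. Pendant in-arcs at $a_x$ (or out-arcs at $d_y$) create directed paths $p\to a_x\to b_x\to c_y\to d_y$ with four edges, so the edge poset has height at least $4$, contrary to your parenthetical claim. The suggested limit ``$q\to-1$ or $q\to 0$ where rising factorials degenerate'' has no formula to act on. Also, in a height-$3$ edge poset the $q$ elements between $e_x$ and $f_y$ must be parallel arcs $b_x\to c_y$, and subdividing them, as you suggest, raises the height.

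The paper avoids all of this by reducing directly from $\#\mathsf{X3C}$. The skeleton is $r\to s_i\to\tilde u_j$, and the amplification consists only of pendant \emph{out}-arcs at $\tilde u_j$ and at $s_i$. These are maximal elements whose only effective constraint, once $\sigma$ is fixed, is the last in-arc of $\tilde u_j$ or the arc $(r,s_i)$. This gives the tree structure. The maximal pole order $4h$ at $q=-1$ is attained exactly by those $\sigma$ whose last $4h$ skeleton edges form $E_X$ for an exact cover $X$. A probabilistic argument then shows that each cover contributes residual weight $(4h)!/(4^h3^{n-h})$, independent of $X$.
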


There is also a natural duality between shellings of undirected graphs and linear extensions of edge posets, as the former imposes a disjunctive condition over edge orderings, while the latter imposes a conjunctive condition. As we will see, the proofs of \Cref{thm:shellings} and \Cref{thm:nfree} reflect that duality.

\section{Proof Strategy} \label{sec:strategy}

\subsection{Basic Notation}
Given a positive integer $n$, we write $[n]$ to denote the set $\{1, 2, \dots, n\}$. For a finite set $X$, we denote by $X!$ the set of all permutations of $X$. 

\vspace{1em}

To prove our results, we will reduce from the counting version of the exact 3-cover problem ($\mathsf{X3C}$) for which $\mathsf{NP}$-hardness of the decision version is well known. In $\mathsf{X3C}$, we are given a collection of sets $\mathcal{S} := \{S_1, \dots, S_n\}$, each of which consists of $3$ elements from a universe $U := \{u_1, \dots, u_{3h}\}$, and the question is whether we can select exactly $h$ sets from $\mathcal{S}$ such that their union equals $U$. It is well known that $\#\mathsf{X3C}$ is $\#\mathsf{P}$-complete, even under parsimonious reductions \cite{X3C}.
\begin{ex}\label{ex:x3c}For $h=2$, consider a universe $U := \{u_1, \ldots, u_6\}$, and $n=4$ sets
\[
    S_1 = \{u_1, u_2, u_4\}, \; S_2 = \{u_3, u_4, u_6\}, \; S_3 = \{u_1, u_2, u_5\}, \; S_4 = \{u_2, u_5, u_6\}.
\]
Then, $\{S_2, S_3\}$ is a solution, as illustrated in~\Cref{fig:example-a}.
\end{ex}

We will assume that every element $u \in U$ appears in at least one set $S_i$, as otherwise the instances of $\#\mathsf{X3C}$ would be trivial. We can also assume without loss of generality that $h < n$: if $h > n$ then there are no solutions, while if $h = n$, the instance can be solved directly.

The proofs of hardness for $\#\mathsf{SVO}$ and $\#\mathsf{Shellings}$ share the same structure. Starting from an instance $I = (\mathcal{S}, U)$ of $\#\mathsf{X3C}$, we construct a family of graphs $G_{I, q}$, indexed by positive integers $q$. The parameter $q$ controls the number of copies of simple gadgets. For example, in the reduction to $\#\mathsf{SVO}$, each element of $U$ is replaced by $q$ distinct copies that are indistinguishable in terms of adjacencies. Thus, increasing $q$ amplifies the contribution of orderings that make many copied elements available early. We can indeed show that some normalized limit of the number of valid orderings of $G_{I, q}$, as $q \to \infty$, recovers the quantity $\#\mathsf{X3C}(I)$.

Importantly, approximating said limit by direct convergence analysis does not yield a polynomial-time reduction. What we do instead is show that the normalized counting function agrees, for all positive integers $q$, with a rational function in $q$ whose denominator is explicitly known and has polynomial degree. Thus, using an oracle for the target counting problem, we can interpolate the numerator of the rational function exactly by evaluating the normalized count for polynomially many positive integer values of $q$. We then extract the value  $\#\mathsf{X3C}(I)$ by evaluating the limit symbolically.

The proofs of hardness for counting linear extensions follow the same interpolation framework. We again construct a family of posets $P_{I,q}$ indexed by positive integers $q$. 
However, we will see that in these reductions, the orderings that encode exact covers actually have a \emph{smaller} asymptotic contribution. Therefore, the desired quantity cannot be recovered by looking at the leading behavior as $q \to \infty$. The key insight is that the interpolation allows us to recover the entire normalized count as a rational function in $q$. Once this rational function has been reconstructed, we can inspect it symbolically at any value of $q$. In this case, the orderings encoding exact covers are detected by a pole at a suitable negative rational value of $q$. Thus, we can isolate the quantity $\#\mathsf{X3C}(I)$ by multiplying by the corresponding vanishing factor and computing the limit algebraically.

A key tool that we will use throughout the entire paper is the well-known hook-length formula for linear extensions of rooted trees (cf.~\cite[Proposition 1]{gaoCountingShellingsComplete2021}). We include a proof for completeness in~Appendix~\ref{subsec:hook_length}.

\begin{lemma}\label{lemma:tree-le}
    Let $T := (V, \preceq)$ be a poset whose Hasse diagram is a rooted tree. For each element $v \in V$, let $T_v := (\{u \in V \mid v \preceq u\}, \preceq)$ be the poset induced by the subtree rooted at $v$.  Then,
    \[
    \#\mathsf{LE}(T) = \frac{|T|!}{\prod_{v \in V} |T_v|}.
    \]
\end{lemma}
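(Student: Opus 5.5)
We prove the hook-length formula by strong induction on $|T|$, peeling off the root and counting interleavings of the subtrees hanging below it. In the Hasse diagram the root $r$ is the unique minimal element, so $r \preceq u$ for every $u \in V$ and $T_r = T$. Every linear extension must therefore place $r$ first. The rest of the extension is an ordering of $V \setminus \{r\}$. Let $c_1, \dots, c_k$ be the children of $r$. Then $V \setminus \{r\}$ is the disjoint union of the subtrees $T_{c_1}, \dots, T_{c_k}$, and no element of one subtree is comparable to an element of another, because comparability in a rooted-tree poset means being on a common root-to-leaf chain.

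The key step is a bijection. A linear extension of $T$ corresponds to a choice of a linear extension of each $T_{c_i}$ together with an interleaving of these $k$ sequences. In the forward direction, we restrict the ordering after $r$ to each subtree; this gives a linear extension of that subtree, and the positions used by each subtree determine the interleaving. In the reverse direction, any interleaving of valid linear extensions respects every relation $u \preceq w$ with $u \neq r$, since such $u$ and $w$ lie in the same subtree. Relations involving $r$ are satisfied because $r$ is placed first. The number of interleavings is the multinomial coefficient, so
\[
\#\mathsf{LE}(T) = \binom{|T|-1}{|T_{c_1}|, \dots, |T_{c_k}|} \prod_{i=1}^k \#\mathsf{LE}(T_{c_i}).
\]

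Next we apply the induction hypothesis to each $T_{c_i}$. This is valid because $T_{c_i}$ has Hasse diagram the subtree rooted at $c_i$, and for $v \in T_{c_i}$ the subtree rooted at $v$ is the same whether computed in $T_{c_i}$ or in $T$. Substituting gives
\[
\#\mathsf{LE}(T) = \frac{(|T|-1)!}{\prod_i |T_{c_i}|!} \prod_{i=1}^k \frac{|T_{c_i}|!}{\prod_{v \in T_{c_i}} |T_v|} = \frac{(|T|-1)!}{\prod_{v \neq r} |T_v|}.
\]
Since $|T_r| = |T|$, the right-hand side equals $|T|!/\prod_{v\in V}|T_v|$. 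The base case $|T|=1$ is immediate, as both sides equal $1$.

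The only step needing care is the bijection with interleavings, specifically the claim that elements in different child subtrees are incomparable. This follows because the Hasse diagram is a tree, so $u \preceq w$ holds only if $u$ is an ancestor of $w$. The rest is bookkeeping with factorials.
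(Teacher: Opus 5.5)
Your proposal is correct and follows essentially the same route as the paper's proof in the appendix. Both proofs induct on $|T|$, place the root first, use the incomparability of distinct child subtrees to count interleavings by a multinomial coefficient, apply the induction hypothesis to each subtree, and cancel the factorials using $|T_r| = |T|$.
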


\section{Successive Vertex Orderings}

For simplicity, we will first consider the problem of counting \defn{rooted SVOs}, that is, the number $\#\mathsf{SVO}(G; r)$ of SVOs of a graph $G$ whose first vertex is prescribed to be some $r \in V(G)$. We will thus focus on the following.

\begin{thm}\label{thm:main2}
    Given $G$ and $r$, computing
    $\#\mathsf{SVO}(G; r)$ is $\#\mathsf{P}$-hard, even for bipartite graphs.
\end{thm}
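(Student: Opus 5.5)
We reduce from $\#\mathsf{X3C}$ using the interpolation scheme of \Cref{sec:strategy}. Given an instance $I=(\mathcal{S},U)$ with $|\mathcal{S}|=n$ and $|U|=3h$, we build a bipartite graph $G_{I,q}$ with root $r$. The root is adjacent to one vertex $s_i$ for each set $S_i$, so $r$ and the $s_i$ lie on opposite sides. For each element $u\in U$ we add $q$ twin copies $u^{(1)},\dots,u^{(q)}$, each adjacent to exactly those $s_i$ with $u\in S_i$. Copies of elements lie on the same side as $r$, so the graph is bipartite. A rooted SVO must start with $r$. After that, a set vertex $s_i$ is always available, while a copy $u^{(j)}$ becomes available only once some $s_i$ with $u\in S_i$ has been placed. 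The total number of vertices is $N=1+n+3hq$.

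The first step is to decompose the count according to the order in which the set vertices appear. Fix the permutation $(s_{\pi(1)},\dots,s_{\pi(n)})$. For each element $u$, let $t(u)$ be the index of the first set in this order that covers $u$. Then the $q$ copies of $u$ may be placed anywhere after $s_{\pi(t(u))}$, and there are no other constraints. So the number of SVOs with this set order is the number of linear extensions of a forest poset. The chain $r<s_{\pi(1)}<\dots<s_{\pi(n)}$ has the copies of $u$ hanging as leaves below $s_{\pi(t(u))}$. This poset is a rooted tree, so \Cref{lemma:tree-le} gives
\[
\frac{N!}{\prod_{k=0}^{n} \bigl(n-k+1+q\,c_k\bigr)},
\]
where $c_k$ is the number of copies hanging at or below chain position $k$. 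Concretely, $c_k=\sum_{\ell\ge k} a_\ell$, where $a_\ell$ is the number of elements first covered at step $\ell$. Writing $m_k=3h-\sum_{\ell<k}a_\ell$ for the number of elements not covered before step $k$, the factor is $(n-k+1)+q\,m_k$.

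Summing over $\pi$, the count $\#\mathsf{SVO}(G_{I,q};r)/N!$ is a sum of reciprocals of products of linear polynomials in $q$. Each product has $n+1$ factors of the form $(n-k+1)+q\,m$ with $m\in\{0,\dots,3h\}$, so the common denominator $D(q)=\prod_{j=1}^{n+1}\prod_{m=0}^{3h}\bigl(j+qm\bigr)$ is explicit and has polynomial degree. Hence $P(q)=D(q)\cdot\#\mathsf{SVO}(G_{I,q};r)/N!$ is a polynomial of polynomially bounded degree with computable bounds. We query the oracle at polynomially many $q$ and recover $P$ by Lagrange interpolation. Each value of $N!$ and $D(q)$ is an integer of polynomially many bits, so everything runs in polynomial time.

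The final and hardest step is extracting $\#\mathsf{X3C}(I)$ from the leading behaviour as $q\to\infty$. A term for $\pi$ has order $q^{-z(\pi)}$, where $z(\pi)$ counts the chain positions $k\ge1$ with $m_k>0$. Equivalently, $z(\pi)$ is the position at which the prefix of $\pi$ first covers all of $U$. Since each set covers at most $3$ elements, $z(\pi)\ge h$, with equality exactly when the first $h$ sets of $\pi$ form an exact cover. The number of such $\pi$ is $\#\mathsf{X3C}(I)\cdot h!\,(n-h)!$.

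For such $\pi$ the exponents are $m_k=3h-3(k-1)$ for $k=1,\dots,h$. The leading coefficient is therefore the same for every exact-cover order, namely a constant $c_{h,n}=\bigl(\prod_{k=1}^{h}3(h-k+1)\bigr)^{-1}$ times the constant factors $\prod_{j=1}^{n-h}j^{-1}$ that come from positions after coverage, together with the $k=0$ factor $n+1+3hq$. I would need to check this bookkeeping carefully, because the $k=0$ factor also grows with $q$. That factor is common to all terms, so it normalizes out.

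Thus
\[
\lim_{q\to\infty} q^{h+1}\,\frac{\#\mathsf{SVO}(G_{I,q};r)}{N!}=c\cdot\#\mathsf{X3C}(I)
\]
for an explicit nonzero rational $c$. We can read this limit off symbolically from the interpolated rational function $P(q)/D(q)$ by comparing leading coefficients, and then divide by $c$. The main obstacle is exactly this uniformity check: verifying that every non-exact order contributes with strictly smaller degree in $q$, and that all exact-cover orders contribute the same leading constant, so that no cancellation or weighting corrupts the count.
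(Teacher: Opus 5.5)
Your proposal is correct and follows essentially the same route as the paper. It uses the same graph $G_{I,q}$ and the same decomposition by the order of the set vertices into tree posets, evaluated with \Cref{lemma:tree-le}. It then interpolates against an explicit common denominator and extracts $\#\mathsf{X3C}(I)$ from the $q\to\infty$ asymptotics, where only exact-cover prefixes survive.

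The differences are bookkeeping only. Keeping $r$ in the tree adds the common factor $n+1+3hq$, which shifts the exponent to $h+1$ and the constant to $3^{-h}/(3h)$. The paper instead drops $r$, normalizes by $(n+3hq)!$, and obtains $q^h$ and $3^{-h}$. Your denominator $D(q)=\prod_{j=1}^{n+1}\prod_{m=0}^{3h}(j+qm)$ is also a valid, slightly leaner choice than the paper's.
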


Then we show a simple Turing reduction that allows computing the rooted quantity $\#\mathsf{SVO}(G; r)$ using repeated calls to an oracle for the unrooted variant.

Our reduction for proving~\Cref{thm:main2} will construct the following graphs. Given an instance $I = (\mathcal{S}, U)$ of $\#\mathsf{X3C}$, and a natural number $q \geq 1$ we construct a bipartite graph $G_{I, q}$ as follows:
\begin{enumerate}
    \item Create a vertex $\ns_i$ for each $S_i \in \mathcal{S}$.
    \item Create a root vertex $r$, and connect it to each of the vertices $\ns_1, \dots, \ns_n$.
    \item For each element $u_i \in U$, create $q$ copies $u_i^{(1)}, \dots, u_i^{(q)}$, and connect all copies to every $\ns_j$ such that $u_i \in S_j$.
\end{enumerate} 

An illustration is presented in~\Cref{fig:example-b}. Recall that we are assuming that every element $u \in U$ appears in at least one set $S_i$, and this ensures that our constructed graphs are connected. 

We can already describe how~\Cref{cor:st-numberings} will be obtained from our proof. Let $\widetilde{G}_{I, q}$ be the graph obtained by adding to $G_{I, q}$ an additional vertex $t$ connected to every vertex of $G_{I, q}$. Then, simply note that $\#\mathsf{SVO}(G_{I, q}; r) = \#st\text{-numberings}(\widetilde{G}_{I, q};\, r;\, t)$.

\begin{figure}
    \begin{subfigure}{0.49\linewidth}
         \centering
         \input{figures/subfig1}
    \caption{Instance $I$ from~Example~\ref{ex:x3c}, with the solution $\{S_2, S_3\}$ colored.}\label{fig:example-a}
    \end{subfigure}
    \hfill
    \begin{subfigure}{0.49\linewidth}
        \centering
        \input{figures/subfig2}
    \caption{$G_{I, 3}$. Thick gray edges between a vertex $S_i$ and a ``group'' $U_j := \{u_j^1, u_j^2, u_j^3\}$ represent all edges $\{S_i, u_j^k\}$ for $k \in \{1,2,3\}$.}\label{fig:example-b}
    \end{subfigure}  
    \caption{Illustrations of an instance of $\mathsf{X3C}$ and one of its corresponding graphs.}\label{fig:example}
\end{figure}

We now explain how to compute $\#\mathsf{X3C}(I)$ in polynomial time using queries to $\#\mathsf{SVO}(G_{I,q}; r)$ for different values of $q$, thereby proving~\Cref{thm:main2}. Recall that $I=(\mathcal{S},U)$ is an instance of $\#\mathsf{X3C}$ of the form described in the previous section. Moreover, we write $U(I)$ for the universe of $I$.

Given a fixed permutation $\sigma \in [n]!$, which induces an ordering $\ns_{\sigma(1)}, \ldots, \ns_{\sigma(n)}$ of the vertices $\ns_1, \dots, \ns_n$, we consider the number $\#\mathsf{SVO}(G_{I, q}; r)_{\sigma}$ of SVOs of $G_{I, q}$ whose first vertex is $r$ and for which the induced order on the vertices $\ns_1, \dots, \ns_n$ corresponds to the one induced by $\sigma$. We will now state a combinatorial lemma regarding those numbers.
 
\begin{lemma}\label{lemma:svo-formula}
Let $\sigma \in [n]!$. For each $j \in [n]$, let $c_j(\sigma)$ be the number of elements of $U(I)$ whose earliest covering set is $S_{\sigma(j)}$, that is, those that are covered by $S_{\sigma(j)}$ but are not covered by $S_{\sigma(i)}$ for any $i<j$. Then 
\[
    \#\mathsf{SVO}(G_{I, q}; r)_{\sigma} =  \frac{(n+3hq)!}{\prod_{i=1}^n\left( i + \sum_{j=1}^{i} q\cdot c_{n-j+1}(\sigma)\right)}.
    \]
\end{lemma}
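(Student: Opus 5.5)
The plan is to show that, once $\sigma$ is fixed, the SVO constraints on $G_{I,q}$ collapse to the order relations of a poset whose Hasse diagram is a rooted tree, and then apply \Cref{lemma:tree-le}.

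First I fix the structure. Every $\ns_i$ is adjacent to $r$, and $r$ comes first, so the vertices $\ns_i$ impose no constraint beyond appearing in the order prescribed by $\sigma$. A copy $u_k^{(\ell)}$ is adjacent only to the vertices $\ns_j$ with $u_k\in S_j$. It can therefore be placed exactly when some such $\ns_j$ has already appeared. Under $\sigma$, the first of these to appear is $\ns_{\sigma(j)}$, where $j$ is the index for which $u_k$ is counted in $c_j(\sigma)$. Hence an ordering of $V(G_{I,q})$ that starts with $r$ and respects $\sigma$ is an SVO if and only if two conditions hold: $\ns_{\sigma(1)},\dots,\ns_{\sigma(n)}$ appear in this order, and each copy of an element first covered by $S_{\sigma(j)}$ appears after $\ns_{\sigma(j)}$.

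These are exactly the relations of the following poset $T_\sigma$. Its Hasse diagram is the path $r\to \ns_{\sigma(1)}\to\cdots\to \ns_{\sigma(n)}$, together with $q\,c_j(\sigma)$ leaves hanging from $\ns_{\sigma(j)}$. This diagram is a rooted tree. The transitive closure adds nothing unwanted: a leaf hanging from $\ns_{\sigma(j)}$ is forced after $\ns_{\sigma(i)}$ for $i\le j$, which already holds by the chain. So $\#\mathsf{SVO}(G_{I,q};r)_\sigma=\#\mathsf{LE}(T_\sigma)$. The step needing the most care is checking this equivalence in both directions, in particular that the \emph{or} condition on a copy reduces to the single earliest neighbor. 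That is precisely where the notion of ``earliest covering set'' in $c_j(\sigma)$ enters.

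Next I apply \Cref{lemma:tree-le}. The total size is $|T_\sigma| = 1+n+q\sum_j c_j(\sigma)=1+n+3hq$, since every element is first covered exactly once. Each leaf contributes a factor $1$. The root contributes $1+n+3hq$. The subtree at $\ns_{\sigma(m)}$ has size $(n-m+1)+q\sum_{j\ge m}c_j(\sigma)$. Reindexing with $i=n-m+1$ turns this into $i+\sum_{j=1}^{i} q\,c_{n-j+1}(\sigma)$. The lemma gives
\[
\#\mathsf{LE}(T_\sigma)=\frac{(n+3hq+1)!}{(n+3hq+1)\prod_{i=1}^n\bigl(i+\sum_{j=1}^i q\,c_{n-j+1}(\sigma)\bigr)},
\]
and cancelling the root factor yields the stated formula.

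Equivalently, one can remove $r$, since it is fixed in first position, and view the rest as a forest-free tree rooted at $\ns_{\sigma(1)}$ with $n+3hq$ elements. Its hook product is the same product over $i=1,\dots,n$, which gives the formula directly.
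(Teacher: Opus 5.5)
Your proposal is correct and follows the paper's proof: you build the same tree poset $T_\sigma$ (the chain $s_{\sigma(1)},\dots,s_{\sigma(n)}$, with each copy $u^{(i)}$ hanging from the vertex of its earliest covering set) and apply \Cref{lemma:tree-le}. The only cosmetic difference is that you keep $r$ as the root and cancel its hook factor $n+3hq+1$, whereas the paper removes $r$ and roots the tree at $s_{\sigma(1)}$, exactly as in your final remark.
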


\begin{proof}
 We define a poset $T_{\sigma} := (V(G_{I, q}) \setminus \{r\}, \preceq)$ whose underlying graph is a rooted tree by the following relationships:
\begin{enumerate}
    \item $s_{\sigma(j)} \preceq s_{\sigma(j+1)}$ for every $1 \leq j < n$.
    \item If $S_{\sigma(j)}$ is the earliest set covering element $u$, then $s_{\sigma(j)} \preceq u^{(i)}$ for every $i \in\{1, \ldots, q\}$.
\end{enumerate}
An illustration continuing~\Cref{ex:x3c} is presented in~\Cref{fig:poset-tree}.
Note that $\ns_{\sigma(1)}$ is the root, and that the linear extensions of this poset are exactly the valid $\mathsf{SVO}$s of $G_{I, q}$ (after $r$) that respect $\sigma$. We have that the subtree rooted at $\ns_{\sigma(j)}$ contains exactly $n-j+1$ vertices of the form $\ns_{\sigma(k)}$ for $k \geq j$, and $\sum_{k=j}^n q \cdot c_{k}(\sigma)$ vertices of the form $u^{(i)}$.
Thus, by~\Cref{lemma:tree-le}, we have that
\begin{align*}
    \#\mathsf{SVO}(G_{I, q}; r)_{\sigma} &= \frac{(n+3hq)!}{\prod_{j=1}^n\left((n-j+1) + \sum_{k=j}^n q \cdot c_{k}(\sigma)\right)}\\
    &=  \frac{(n+3hq)!}{\prod_{i=1}^n \left(i + \sum_{j=1}^{i} q\cdot c_{n-j+1}(\sigma)\right)} \tag{changing variables}.
\end{align*}
\end{proof}

\begin{figure}
    \centering
    \input{figures/poset_tree}
    \caption{Poset $T_\sigma$ for $\sigma = (2, 4, 1, 3)$, coming from the instance $I$ of~\Cref{ex:x3c} with $q = 3$.}
    \label{fig:poset-tree}
\end{figure}

\Cref{lemma:svo-formula} gives us a rational function for each fixed $\sigma \in [n]!$. We thus have that 
\[
        \frac{\#\mathsf{SVO}(G_{I, q}; r)}{(n+3hq)!} = \sum_{\sigma \in [n]!} \chi_{\sigma}(q),
\]
where
\[
    \chi_{\sigma}(q) \coloneqq  \frac{1}{\prod_{i=1}^n \left(i + \sum_{j=1}^i q \cdot c_{n-j+1}(\sigma)\right)}.
\]

We now show that this sum can be recovered in polynomial time from oracle evaluations.

\begin{proposition}\label{prop:compute-polynomials}
   Given an instance $I = (\mathcal{S}, U)$ of $\#\mathsf{X3C}$, and an oracle for $\#\mathsf{SVO}(\cdot;\cdot)$, we can compute in polynomial time (in terms of the size of the instance $I$) two polynomials $R$ and $D$ in canonical form such that the following equality between rational functions holds:
   \[ \sum_{\sigma \in [n]!} \chi_{\sigma}(q) = \frac{R(q)}{D(q)}.
    \]
\end{proposition}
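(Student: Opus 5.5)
The plan is to fix an explicit common denominator $D(q)$ in advance. Once we have it, $R(q) := D(q)\sum_\sigma \chi_\sigma(q)$ is a polynomial of known polynomial degree, and we recover it by Lagrange interpolation from oracle calls.

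Each factor of the denominator of $\chi_\sigma$ has the form $i + q\cdot m$. Here $i\in[n]$, and $m = \sum_{j=1}^{i} c_{n-j+1}(\sigma)$ is an integer in $\{0,1,\dots,3h\}$. This holds because the $c_j(\sigma)$ are nonnegative and sum to $|U| = 3h$: every element is covered, and each is counted exactly once, at its earliest covering set. So every $\chi_\sigma$ is the reciprocal of a product of $n$ distinct linear factors, namely the factor for $i=1,\dots,n$, each drawn from the finite set $\{\,i+mq : i\in[n],\ 0\le m\le 3h\,\}$. We therefore take
\[
D(q) := \prod_{i=1}^{n}\prod_{m=0}^{3h} (i + mq).
\]
This is a polynomial of degree at most $n(3h+1)$ with integer coefficients. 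Its canonical form is computable in polynomial time by multiplying out the factors, since all coefficients have polynomially many bits. For each $\sigma$, the denominator of $\chi_\sigma$ divides $D(q)$ as a polynomial: the $i$-th factor of $\chi_\sigma$ uses index $i$ with a single value of $m$, so the product appears among the factors of $D$. Hence $D(q)\chi_\sigma(q)$ is a polynomial of degree at most $\deg D - n$. Consequently $R(q) := D(q)\sum_\sigma\chi_\sigma(q)$ is a polynomial of degree at most $d := 3hn$, and the identity of rational functions in the statement holds by definition.

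It remains to compute $R$. For $q = 1,2,\dots,d+1$, we build $G_{I,q}$, which has $1+n+3hq$ vertices and hence size polynomial in $|I|$. We query the oracle for $\#\mathsf{SVO}(G_{I,q};r)$ and compute
\[
R(q) = D(q)\cdot \frac{\#\mathsf{SVO}(G_{I,q};r)}{(n+3hq)!}
\]
exactly as a rational number, which is in fact an integer. This uses the identity $\#\mathsf{SVO}(G_{I,q};r)/(n+3hq)! = \sum_\sigma \chi_\sigma(q)$ from \Cref{lemma:svo-formula}. The numbers involved have polynomially many bits: the SVO count is at most $(n+3hq+1)!$, and $(n+3hq)!$ and $D(q)$ are similarly bounded. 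The arithmetic is therefore polynomial. Lagrange interpolation through $d+1$ points then yields the canonical coefficients of $R$ in polynomial time, again with polynomial bit-length.

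The main obstacle is the divisibility/degree bookkeeping, and in particular a careful check that the common denominator does not depend on the unknown $c_j(\sigma)$, since those encode the hard instance. The key observation is that the partial sums $m$ range only over $\{0,\dots,3h\}$, whatever $\sigma$ is. Because of this, $D$ can be written down without knowing anything about which $\sigma$ produce which sums. The remaining details (exact rational arithmetic and bounded bit sizes during interpolation) are routine.
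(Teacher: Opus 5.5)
Your proposal is correct and is essentially the paper's argument: fix a $\sigma$-independent common denominator $D$, observe that $R=D\sum_\sigma\chi_\sigma$ is a polynomial of polynomially bounded degree, and recover it from oracle values at $q=1,\dots,\deg R+1$ by Lagrange interpolation. The only difference is that your $D=\prod_{i=1}^{n}\prod_{m=0}^{3h}(i+mq)$ sharpens the paper's $\prod_{C=0}^{|U|}\prod_{\ell=1}^{n}(Cq+\ell)^n$: exponent $1$ suffices because each position $i$ contributes exactly one factor with constant term $i$, so you need $3hn+1$ oracle calls instead of $3hn^2+1$. One small slip: $D\chi_\sigma$ need not have degree at most $\deg D-n$, since factors with $m=0$ are constants; but it is a product of $3hn$ factors of degree at most $1$, so your final bound $d=3hn$ still holds.
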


\begin{proof}

Note that all terms $i + \sum_{j=1}^i q \cdot c_{n-j+1}(\sigma)$ in the denominator of $\chi_{\sigma}(q)$ are of the form $Cq + \ell$ for integers $0 \leq C \leq |U|$ and $1 \leq \ell \leq n$. Thus, 
the polynomial 
\[
    D(q) := \prod_{C = 0}^{|U|} \prod_{\ell=1}^{n} (Cq + \ell)^n
\]
is necessarily divisible by the polynomial $\frac{1}{\chi_\sigma(q)}$ for every $\sigma$, making $D(q) \cdot \chi_\sigma(q)$ a polynomial of degree at most $\deg D \leq |U|n^2$. Therefore,
\[
    \sum_{\sigma \in [n]!} D(q)\cdot \chi_\sigma(q) = D(q) \cdot \sum_{\sigma \in [n]!} \chi_\sigma(q)
\] is also a polynomial, $R(q)$, of degree at most $|U| \cdot n^2$.  Thus, we can obtain its coefficients by first evaluating $\frac{\#\mathsf{SVO}(G_{I, q}; r)}{(n+3hq)!}$ at the points $1, 2, \ldots, |U|n^2 + 1$ (via the oracle) and $D$ at the same points (directly, since we know $D$ explicitly), then applying Lagrange interpolation, and finally putting the result into canonical form. Notice that $D$ can also be put into canonical form in polynomial time.
\end{proof}

The next proposition shows how to recover $\#\mathsf{X3C}$ from the rational function.

\begin{proposition}\label{prop:reduction}
    For every instance $I = (\mathcal{S}, U)$ of $\#\mathsf{X3C}$ with $|U| = 3h$ it holds that
    \[
        \lim_{q \to \infty} q^h \cdot \sum_{\sigma \in [n]!} \chi_{\sigma}(q) = 3^{-h} \cdot \#\mathsf{X3C}(I).
    \]
\end{proposition}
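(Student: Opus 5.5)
We analyze each $\chi_\sigma(q)$ separately as $q \to \infty$. Since the sum over $[n]!$ is finite, the limit of the sum is the sum of the limits. For a fixed $\sigma$, write the $i$-th denominator factor as $i + q\,P_i(\sigma)$, where
\[
P_i(\sigma) := \sum_{j=1}^{i} c_{n-j+1}(\sigma)
\]
is the number of elements whose earliest covering set lies among the last $i$ sets in the order $\sigma$. This factor is $\Theta(q)$ when $P_i(\sigma) > 0$ and equals the constant $i$ when $P_i(\sigma)=0$. Hence
\[
\chi_\sigma(q) = \Theta\bigl(q^{-k(\sigma)}\bigr), \qquad k(\sigma) := \#\{i \in [n] : P_i(\sigma) > 0\}.
\]
The task therefore reduces to understanding the minimum of $k(\sigma)$ and the leading constants of the permutations attaining it.

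The combinatorial step is to show that $k(\sigma) \geq h$ always, with equality exactly when the first $h$ sets $S_{\sigma(1)},\dots,S_{\sigma(h)}$ form an exact cover. Let $m$ be the largest index with $c_m(\sigma) > 0$. Then $P_i(\sigma) > 0$ exactly for $i \geq n-m+1$, so $k(\sigma) = m$. Every element is covered by $S_{\sigma(1)},\dots,S_{\sigma(m)}$, since each element's earliest covering set has index at most $m$. Because each set has $3$ elements and $|U| = 3h$, this forces $m \geq h$. Equality $m = h$ holds iff these $h$ sets cover $U$, which requires them to be pairwise disjoint, i.e.\ an exact cover. In that case $c_j(\sigma) = 3$ for all $j \le h$.

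For such $\sigma$ we compute the limit directly. The factors with $i \le n-h$ contribute $\prod_{i=1}^{n-h} i^{-1} = 1/(n-h)!$. For $i = n-h+t$ with $t \in [h]$ we have $P_i(\sigma) = 3t$, so
\[
q^h \prod_{t=1}^{h} \frac{1}{n-h+t+3tq} \;\longrightarrow\; \prod_{t=1}^h \frac{1}{3t} = \frac{1}{3^h\, h!}.
\]
Every other $\sigma$ has $k(\sigma) \geq h+1$, so $q^h\chi_\sigma(q) \to 0$.

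It remains to count the permutations achieving the minimum. Each exact cover $\{S_{a_1},\dots,S_{a_h}\}$ corresponds to exactly $h!\,(n-h)!$ such $\sigma$: order the cover's sets in the first $h$ positions arbitrarily and the remaining sets afterwards arbitrarily. Conversely, each minimizing $\sigma$ determines its cover uniquely. Summing gives
\[
\#\mathsf{X3C}(I)\cdot h!\,(n-h)! \cdot \frac{1}{(n-h)!\, 3^h\, h!} = 3^{-h}\,\#\mathsf{X3C}(I),
\]
as claimed. The main obstacle, though mild, is the bookkeeping in the middle step: correctly identifying $k(\sigma)$ with the index of the last set contributing new elements. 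That identification is what makes the reversed indexing $c_{n-j+1}$ in \Cref{lemma:svo-formula} line up with the prefix $S_{\sigma(1)},\dots,S_{\sigma(h)}$.
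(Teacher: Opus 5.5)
Your proposal is correct and follows essentially the same route as the paper: split permutations by whether $S_{\sigma(1)},\dots,S_{\sigma(h)}$ form an exact cover, compute the limit $\frac{3^{-h}}{h!\,(n-h)!}$ in the cover case and $0$ otherwise, then count $h!\,(n-h)!$ permutations per cover. Your identification $k(\sigma)=m\ge h$ simply makes explicit the paper's one-line assertion that $C_{n-h,\sigma}>0$ whenever the first $h$ sets are not an exact cover.
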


\begin{proof}

Define $C_{i, \sigma} := \sum_{j=1}^i c_{n-j+1}(\sigma)$, which corresponds to the number of elements of $U(I)$ whose earliest covering set appears within the last $i$ sets according to $\sigma$. We thus have that
\[
    \chi_{\sigma}(q) = \prod_{i=1}^n \frac{1}{i + q \cdot C_{i, \sigma}}.
\]

We first consider the case where the first $h$ sets according to $\sigma$ form an exact 3-cover, so that $C_{i, \sigma} = 0$ when $i \leq n-h$, and $C_{i, \sigma} = 3(i+h-n)$ when $n-h < i \leq n$. Now observe that
\begin{align*}
\lim_{q \to \infty} q^h \chi_{\sigma}(q) &= \lim_{q \to \infty} q^h \prod_{i=1}^n \frac{1}{i + q C_{i, \sigma}} \\ &= \frac{1}{(n-h)!} \cdot \prod_{i=n-h+1}^n\lim_{q \to \infty} \frac{q}{i+3q(i+h-n)} \\
&= \frac{1}{(n-h)!} \cdot \prod_{i=n-h+1}^n\frac{1}{3(i+h-n)} \\
&= \frac{3^{-h}}{(n-h)!h!}.
\end{align*}
We now consider the case where the first $h$ sets according to $\sigma$ do not form an exact 3-cover, so that $C_{i, \sigma} > 0$ for all $i \geq n-h$. Then we have that
\begin{align*}
0 \leq \lim_{q \to \infty} q^h \chi_{\sigma}(q) &= \lim_{q \to \infty} q^h \prod_{i=1}^n \frac{1}{i + q C_{i, \sigma}}\\
&=
\lim_{q \to \infty}\left(\prod_{i=1}^{n-h-1} \frac{1}{i+qC_{i, \sigma}}\right) \cdot \frac{1}{n-h + qC_{n-h, \sigma}}\cdot \left(\prod_{i=n-h+1}^n \frac{q}{i + q C_{i, \sigma}}\right)
\\ &\leq \frac{1}{(n-h-1)!} \cdot \left(\lim_{q \to \infty} \frac{1}{n-h+qC_{n-h, \sigma}} \right) \cdot \prod_{i=n-h+1}^n\lim_{q \to \infty} \frac{q}{i+qC_{i, \sigma}} \\
&= \frac{1}{(n-h-1)!} \cdot 0 \cdot \prod_{i=n-h+1}^n\frac{1}{C_{i, \sigma}} \\
&= 0.
\end{align*}
Let $X \subseteq [n]!$ be the set of permutations $\sigma$ such that the sets $S_{\sigma(1)}, \dots, S_{\sigma(h)}$ form an exact cover. Notice that each exact cover corresponds with exactly $(n-h)!h!$ permutations in $X$. Therefore,
\[
\lim_{q \to \infty} q^h \cdot \sum_{\sigma \in [n]!} \chi_{\sigma}(q) = \sum_{\sigma \in [n]!} \lim_{q \to \infty} q^h \chi_{\sigma}(q) = \sum_{\sigma \in X}{\frac{3^{-h}}{(n-h)!h!}} = 3^{-h} \cdot \#\mathsf{X3C}(I).
\]    
\end{proof}

With these ingredients, the proof of~\Cref{thm:main2} is straightforward.
\begin{proof}[Proof of~\Cref{thm:main2}]
    Let $I$ be an arbitrary instance of $\#\mathsf{X3C}$.
    By~\Cref{prop:compute-polynomials} and ~\Cref{prop:reduction}, we can compute in polynomial time polynomials $R$ and $D$ such that
    \[
        \#\mathsf{X3C}(I) = 3^h \cdot \lim_{q \to \infty} \frac{q^h R(q)}{D(q)}.
    \]
    Since $\frac{q^hR(q)}{D(q)}$ is a ratio of polynomials, of size polynomial in the size of $I$, the limit can be computed symbolically in polynomial time, as it corresponds simply to the ratio of the leading coefficients.
\end{proof}

We will now show how to obtain~\Cref{thm:main} from~\Cref{thm:main2}. The result follows directly from the following proposition.

\begin{proposition}\label{prop:unrooting_SVO}
Given a bipartite graph $G = (V, E)$ and a vertex $r \in V$, it is possible to compute $\#\mathsf{SVO}(G; r)$ in polynomial time, given access to an oracle for counting SVOs of bipartite graphs.
\end{proposition}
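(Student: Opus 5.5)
Our approach is to attach to $r$ a large pendant gadget that forces almost all SVOs to start at $r$ or inside the gadget, and then to interpolate. Concretely, for a parameter $k \geq 1$ let $G_k$ be obtained from $G$ by attaching to $r$ a path $P_k = (r, p_1, \dots, p_k)$. Since the path has even length structure compatible with any bipartition ($p_1$ goes on the side opposite to $r$, and so on), $G_k$ is bipartite. We then classify the SVOs of $G_k$ by where they start and count each class in terms of quantities of $G$.

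\emph{Counting by starting point.} An SVO of $G_k$ starting at some $p_j$ must first reveal a contiguous segment of the path containing $p_j$, interleaved in a constrained way. The key observation is that $G$ is only entered through $r$. Any SVO of $G_k$ restricts on $V(G)$ to a rooted SVO of $G$ with root $r$ whenever $r$ is added before any other vertex of $G$, and this always happens if the ordering starts on the path. Likewise the restriction to $P_k \cup \{r\}$ is an SVO of that path. The two restricted orders can be interleaved freely after $r$ appears, subject only to what precedes $r$. Hence the number of SVOs of $G_k$ starting at a path vertex equals $\#\mathsf{SVO}(G;r)$ times an explicit combinatorial factor $a_k$, computable in polynomial time by elementary counting of path SVOs and interleavings. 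SVOs of $G_k$ starting at a vertex of $G$ contribute a term $B_k$, which depends on unknown quantities of $G$.

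\emph{Separating the terms.} The cleanest way to isolate $\#\mathsf{SVO}(G;r)$ is to compare growth rates. Intuitively, path-started orderings carry freedom of order roughly $2^k$ times binomial interleavings, while $G$-started orderings must reach $r$ before entering the path. Rather than an asymptotic argument, we would instead use two oracle calls. The simplest version is $\#\mathsf{SVO}(G_1) - \#\mathsf{SVO}(G)$-type differences. In fact the simplest gadget is a single pendant leaf $p$ attached to $r$. SVOs of $G_1$ starting at $p$ are then exactly $p$ followed by an SVO of $G$ rooted at $r$, giving $\#\mathsf{SVO}(G;r)$. The remaining SVOs of $G_1$ are SVOs of $G$ with $p$ inserted anywhere after $r$. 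This term depends on the position of $r$, which is unknown, so this obstacle remains, and we would handle it by interpolation in $k$.

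\emph{Interpolation, the main obstacle.} With pendant leaves $p_1, \dots, p_k$ all attached to $r$ (a star, still bipartite), let $N_t$ be the number of SVOs of $G$ in which $r$ is at position $t$, for $t = 1, \dots, n$. An SVO of $G_k$ starting in $G$ inserts the $k$ leaves after $r$, giving $N_t \cdot (n-t+1)\cdots(n-t+k) \cdot$ (appropriately counted), that is, $N_t\, \frac{(n+k-t)!}{(n-t)!}$. An SVO starting at a leaf $p_i$ continues with $r$, then interleaves the remaining $k-1$ leaves with an SVO of $G$ rooted at $r$, giving $k\, N_1 \frac{(n+k-1)!}{n!}\cdot$ a fixed factor. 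Thus $\#\mathsf{SVO}(G_k)/k!$ is a linear combination $\sum_t N_t\, \alpha_t(k)$ with known coefficients that are polynomials in $k$ of distinct degrees $n-t$, with the leaf-start term adding a further known multiple of $N_1$. Querying $k = 0, \dots, n$ yields an invertible (triangular-in-degree) linear system in the unknowns $N_1, \dots, N_n$. Solving it returns $N_1 = \#\mathsf{SVO}(G;r)$. The step we expect to need the most care is verifying that this system is nonsingular, which we would do by the distinct-degree argument on the polynomials $\binom{n+k-t}{k}$.
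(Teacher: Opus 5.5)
Your final argument is correct and is essentially the paper's proof. You attach $k$ pendant leaves to $r$, and you split the SVOs of $G_k$ into those starting at a leaf, which contribute $N_1=\#\mathsf{SVO}(G;r)$ times a known factor, and those starting in $G$, which contribute $\sum_t N_t \binom{n+k-t}{k}\,k!$. You then interpolate in $k$; $N_1$ is isolated because it is the only unknown attached to a degree-$(n-1)$ polynomial, which is exactly the paper's observation that the leading coefficient equals $\frac{2}{(n-1)!}\#\mathsf{SVO}(G;r)$.

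One small correction: the leaf-start count is $k\,N_1\,\frac{(n+k-2)!}{(n-1)!}$, i.e.\ $N_1\binom{n+k-2}{n-1}$ after dividing by $k!$, rather than $\frac{(n+k-1)!}{n!}$ times a fixed factor. This does not affect your distinct-degree argument, and the path-gadget detour can simply be dropped.
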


\begin{proof}[Proof of~\Cref{prop:unrooting_SVO}]
    

    Let $(G; r)$ be an instance of $\#\mathsf{SVO}(G; r)$, where $G$ is bipartite, and let $N = |V(G)|$. We construct another bipartite graph $H_\ell$ by adding $\ell$ additional vertices to $G$, all of which are exclusively connected to $r$. We call these vertices the \emph{leaves} of $H_\ell$.
    Now, observe that there are only two kinds of SVOs for $H_\ell$: those that start with a leaf, and those that do not.
    For the first kind, we claim that their number is
    \[
    \ell \cdot \# \mathsf{SVO}(G; r) \cdot \binom{\ell+N-2}{\ell-1} (\ell-1)!.
    \]
    Indeed, there are $\ell$ options for the initial leaf, after which the only possibility is to place $r$. The remaining ordering, when projected down to $V(G)$, must be an SVO of $G$. The $\ell-1$ remaining leaves can be placed in any of the remaining $\ell+N-2$ positions, in any of their $(\ell-1)!$ possible permutations.

    For the second kind, given an SVO $\pi$ of $G$, let $\pi_r$ be the position of $r$ in $\pi$. Then, $\pi$ can be extended to an SVO of $H_\ell$ by adding the $\ell$ leaves in any of the $\ell+N - \pi_r$ positions after $r$, in any of their $\ell!$ permutations. Therefore, the second kind is counted by
    \[
    \sum_{\pi \in \mathsf{SVO}(G)} \binom{\ell+N - \pi_r}{\ell} \cdot \ell!
    \]
    We can separate the case $\pi_r = 1$, and obtain
    \[
    \sum_{\pi \in \mathsf{SVO}(G)} \binom{\ell+N - \pi_r}{\ell} \cdot \ell! = \# \mathsf{SVO}(G; r) \cdot \binom{\ell+N-1}{\ell} \cdot \ell! + \sum_{\substack{\pi \in \mathsf{SVO}(G)\\ \pi_r > 1}} \binom{\ell+N - \pi_r}{\ell} \cdot \ell!
    \]
    Therefore,
    \[
    \frac{\#\mathsf{SVO}(H_\ell)}{\ell!} = \# \mathsf{SVO}(G; r)\left[\binom{\ell+N-2}{\ell-1} +  \binom{\ell+N-1}{\ell} \right] + \sum_{\substack{\pi \in \mathsf{SVO}(G)\\ \pi_r > 1}} \binom{\ell+N - \pi_r}{\ell}.
    \]
    Now, observe that the right side of the previous equation is a polynomial in $\ell$, whose degree is at most $N-1$. Therefore, by evaluating 
    \[
    \frac{\#\mathsf{SVO}(H_1)}{1!}, \frac{\#\mathsf{SVO}(H_2)}{2!}, \dots, \frac{\#\mathsf{SVO}(H_{N})}{N!}
    \]
    via an oracle for $\#\mathsf{SVO}(\cdot)$, and using Lagrange interpolation, we recover the coefficients of the polynomial in canonical form
    \[
    P(\ell) := \# \mathsf{SVO}(G; r)\left[\binom{\ell+N-2}{\ell-1} +  \binom{\ell+N-1}{\ell} \right] + \sum_{\substack{\pi \in \mathsf{SVO}(G)\\ \pi_r > 1}} \binom{\ell+N - \pi_r}{\ell}.
    \]
    But the leading coefficient of this polynomial is precisely $\frac{2}{(N-1)!}\#\mathsf{SVO}(G; r)$. Therefore, we have shown that polynomially many queries to an oracle for $\#\mathsf{SVO}(\cdot)$ allow us to compute $\#\mathsf{SVO}(G; r)$, and thus by~\Cref{thm:main2} we conclude the proof.
\end{proof}

\section{Linear Extensions of Height-2 Posets}


We again reduce from $\#\mathsf{X3C}$. The gadgets are essentially the same as the ones we used for proving hardness of $\#\mathsf{SVO}$. Given an instance $I = (\mathcal{S}, U)$ of $\#\mathsf{X3C}$, and a natural number $q \geq 1$ we construct a poset $P_{I, q}$ of height $2$ as follows:
\begin{enumerate}
    \item Create an element $\ns_i$ of the first layer for each $S_i \in \mathcal{S}$.
    \item For each element $u_i \in U$, create $q$ copies $u_i^{(1)}, \dots, u_i^{(q)}$ and add all the relations $\ns_j < u_i^{(k)}$ such that $u_i \in S_j$.
\end{enumerate} 

Given a fixed permutation $\sigma \in [n]!$, which induces an ordering $\ns_{\sigma(1)}, \ldots, \ns_{\sigma(n)}$ of the elements of the first layer, we consider the number $\#\mathsf{LE}(P_{I, q})_{\sigma}$ of linear extensions of $P_{I, q}$ for which the induced order on the elements $\ns_1, \dots, \ns_n$ corresponds to the one induced by $\sigma$.

\begin{lemma}\label{lemma:le-formula}
Let $\sigma \in [n]!$. For each $j \in [n]$, let $c_j(\sigma)$ be the number of elements of $U(I)$ whose last covering set is $S_{\sigma(j)}$, that is, those that are covered by $S_{\sigma(j)}$ but are not covered by $S_{\sigma(i)}$ for any $i>j$. Then 
\[
    \#\mathsf{LE}(P_{I, q})_{\sigma} =  \frac{(n+3hq)!}{\prod_{i=1}^n \left(i + \sum_{j=1}^{i} q\cdot c_{n-j+1}(\sigma)\right)}.
    \]
\end{lemma}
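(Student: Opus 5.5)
The plan is to mirror the proof of \Cref{lemma:svo-formula}: I would encode the set of linear extensions of $P_{I,q}$ that induce $\sigma$ on the first layer as exactly the set of linear extensions of an auxiliary poset $T'_\sigma$ whose Hasse diagram is a rooted tree, and then apply \Cref{lemma:tree-le}. The only change from the SVO case is that the disjunctive ``some neighbor before'' condition becomes the conjunctive ``all in-neighbors before'' condition. So each copy $u^{(k)}$ now hangs below the \emph{last} covering set in $\sigma$ rather than the earliest one.

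\textbf{Step 1: the auxiliary poset.} Let $T'_\sigma$ be the poset on the elements of $P_{I,q}$ generated by two kinds of relations:
\begin{itemize}
\item the chain $\ns_{\sigma(1)} \preceq \ns_{\sigma(2)} \preceq \cdots \preceq \ns_{\sigma(n)}$;
\item $\ns_{\sigma(j)} \preceq u^{(k)}$ for all $k \in [q]$, whenever $S_{\sigma(j)}$ is the last set in the $\sigma$-order covering $u$.
\end{itemize}
Every $u$ lies in some set, so each $u^{(k)}$ gets exactly one parent. Hence the Hasse diagram is a tree rooted at $\ns_{\sigma(1)}$.

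\textbf{Step 2: the two sets of orderings coincide.} This is the only real content, so I expect it to be the main obstacle.

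First take a linear extension of $P_{I,q}$ that induces $\sigma$. The chain relations hold by assumption. Every $u^{(k)}$ follows all the $\ns_j$ with $u \in S_j$, and in particular the last of them under $\sigma$. So the ordering is a linear extension of $T'_\sigma$.

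Conversely, take a linear extension of $T'_\sigma$. The chain forces the induced order on the first layer to be $\sigma$. For any relation $\ns_j < u^{(k)}$ of $P_{I,q}$, the set $S_j$ covers $u$. Hence $\ns_j$ is $\ns_{\sigma(j')}$ for some $j'$ that is at most the position of the last covering set $\ns_{\sigma(j^*)}$. Transitivity through the chain gives $\ns_j \preceq \ns_{\sigma(j^*)} \preceq u^{(k)}$, so the relation is respected.

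This bijection is what makes the count for fixed $\sigma$ a pure hook-length count.

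\textbf{Step 3: subtree sizes and the formula.} In $T'_\sigma$, the subtree rooted at $\ns_{\sigma(j)}$ contains the $n-j+1$ elements $\ns_{\sigma(k)}$ with $k \geq j$. It also contains the $q\sum_{k=j}^n c_k(\sigma)$ copies hanging below them, with $c_k$ defined via the last covering set as in the statement. Each $u^{(k)}$ is a leaf, so its subtree has size $1$. The total size is $|T'_\sigma| = n + 3hq$. Then \Cref{lemma:tree-le} gives
\[
\#\mathsf{LE}(P_{I,q})_\sigma = \frac{(n+3hq)!}{\prod_{j=1}^n \left((n-j+1) + q\sum_{k=j}^n c_k(\sigma)\right)}.
\]
The same change of variables $i = n-j+1$ used in \Cref{lemma:svo-formula} turns this into the stated expression.
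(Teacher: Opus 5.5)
Your proposal is correct and follows the paper's proof: the same auxiliary tree poset (the set elements chained in the order $\sigma$, with all $q$ copies of each $u$ placed below the last covering set of $u$), the same subtree sizes, and the hook-length formula (\Cref{lemma:tree-le}) followed by the substitution $i=n-j+1$. Your Step~2 spells out the two-way correspondence between the linear extensions of $T_\sigma$ and those of $P_{I,q}$ that respect $\sigma$, which the paper only asserts.
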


\begin{proof}
 We define a new poset $T_{\sigma}$ over the same elements but a new relation $\preceq$ given by the following conditions:
\begin{enumerate}
    \item $\ns_{\sigma(j)} \preceq \ns_{\sigma(j+1)}$ for every $1 \leq j < n$.
    \item If $S_{\sigma(j)}$ is the last set covering element $u$, then $\ns_{\sigma(j)} \preceq u^{(i)}$ for every $i \in \{1, \ldots, q\}$.
\end{enumerate}
Note that $T_{\sigma}$ is a tree poset, and its linear extensions are exactly the valid linear extensions of $P_{I, q}$ that respect $\sigma$. We have that the subtree rooted at $\ns_{\sigma(j)}$ contains exactly $n-j+1$ elements of the form $\ns_{\sigma(k)}$ for $k \geq j$, and $\sum_{k=j}^n q \cdot c_{k}(\sigma)$ elements of the form $u^{(i)}$.
Thus, by~\Cref{lemma:tree-le}, we have that
\begin{align*}
    \#\mathsf{LE}(P_{I, q})_{\sigma} &= \frac{(n+3hq)!}{\prod_{j=1}^n \left((n-j+1) + \sum_{k=j}^n q \cdot c_{k}(\sigma)\right)}\\
    &=  \frac{(n+3hq)!}{\prod_{i=1}^n \left(i + \sum_{j=1}^{i} q\cdot c_{n-j+1}(\sigma)\right)} \tag{change variables}.
\end{align*}
\end{proof}

From \Cref{lemma:le-formula} it follows that

\[
        \frac{\#\mathsf{LE}(P_{I, q})}{(n+3hq)!} = \sum_{\sigma \in [n]!} \chi_{\sigma}(q),
\]
where
\[
    \chi_{\sigma}(q) \coloneqq \frac{1}{\prod_{i=1}^n  \left(i + \sum_{j=1}^i q \cdot c_{n-j+1}(\sigma)\right)}.
\]

As before, we now show how to recover that sum in polynomial time from oracle evaluations.

\begin{proposition}\label{prop:le_compute-polynomials}
   Given an instance $I$ of $\#\mathsf{X3C}$, and an oracle for $\#\mathsf{LE}(\cdot)$, we can compute in polynomial time (in terms of the size of the instance $I$) two polynomials $R$ and $D$ in canonical form such that the following equality between rational functions holds:
   \[ \sum_{\sigma \in [n]!} \chi_{\sigma}(q) = \frac{R(q)}{D(q)}.
    \]
\end{proposition}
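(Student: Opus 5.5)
The plan is to repeat the argument of \Cref{prop:compute-polynomials} essentially verbatim, with the oracle for $\#\mathsf{SVO}$ replaced by an oracle for $\#\mathsf{LE}$ evaluated on the posets $P_{I,q}$. The key observation is that \Cref{lemma:le-formula} gives each $\chi_\sigma(q)$ exactly the same shape as before. Every factor $i + \sum_{j=1}^i q\cdot c_{n-j+1}(\sigma)$ in its denominator has the form $Cq+\ell$, where $C=\sum_{j=1}^i c_{n-j+1}(\sigma)$ is an integer with $0\le C\le |U|$ (each element of $U$ has exactly one last covering set) and $1\le \ell\le n$. The fact that $c_j$ now counts last covering sets instead of earliest ones plays no role in this bound.

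The first step is to define
\[
D(q) := \prod_{C=0}^{|U|}\prod_{\ell=1}^{n}(Cq+\ell)^n.
\]
Since $1/\chi_\sigma(q)$ is a product of $n$ such linear factors, each repeated at most $n$ times, $1/\chi_\sigma(q)$ divides $D(q)$. Hence $D(q)\chi_\sigma(q)$ is a polynomial, and $\deg D\le (|U|+1)n^2$, which is polynomial in the size of $I$. Summing over $\sigma$, the quantity $R(q):=D(q)\sum_{\sigma}\chi_\sigma(q)$ is a polynomial of degree at most $\deg D$. The number of terms is $n!$, but this does not matter, since we never enumerate the terms.

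The second step is to recover $R$ by interpolation. For each $q=1,\dots,\deg D+1$, we build $P_{I,q}$ and note that it has $n+3hq$ elements, so it has polynomial size for these values of $q$. We query the oracle for $\#\mathsf{LE}(P_{I,q})$ and divide by $(n+3hq)!$, which by \Cref{lemma:le-formula} equals $\sum_\sigma\chi_\sigma(q)$. We then multiply by the directly computed value $D(q)$. Lagrange interpolation over the rationals on these $\deg D+1$ points yields the coefficients of $R$ in canonical form. Expanding $D$ into canonical form is also polynomial-time, because it has polynomially many linear factors with small integer coefficients.

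The only point needing care is bit complexity. The oracle values and $(n+3hq)!$ have polynomially many bits, since $q$ is polynomially bounded. The values $D(q)$ are products of polynomially many polynomial-size integers. Lagrange interpolation with exact rational arithmetic on polynomially many points of polynomial bit length therefore runs in polynomial time, which completes the argument.
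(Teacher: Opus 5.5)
Your proposal is correct and matches the paper's proof: the same $D(q)$, the same divisibility argument, and Lagrange interpolation of $R = D\cdot\sum_\sigma \chi_\sigma$ from oracle values at $q = 1, \dots, \deg D + 1$. Your degree bound $(|U|+1)n^2$ is slightly looser than the paper's $|U|n^2$ (the $C=0$ factors are constants), which is harmless, and your explicit bit-complexity check fills in a detail the paper leaves implicit.
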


\begin{proof}
The argument will be completely analogous to the proof of \Cref{prop:compute-polynomials}.

Note that all terms $i + \sum_{j=1}^i q \cdot c_{n-j+1}(\sigma)$ are of the form $Cq + \ell$ for integers $0 \leq C \leq |U|$ and $1 \leq \ell \leq n$. Thus, 
the polynomial 
\[
    D(q) := \prod_{C = 0}^{|U|} \prod_{\ell=1}^{n} (Cq + \ell)^n
\]
is divisible by the polynomial $\frac{1}{\chi_\sigma(q)}$ for every $\sigma$, making $D(q) \cdot \chi_\sigma(q)$ a polynomial of degree at most $\deg D \leq |U|n^2$. Therefore,
\[
    \sum_{\sigma \in [n]!} D(q)\cdot \chi_\sigma(q) = D(q) \cdot \sum_{\sigma \in [n]!} \chi_\sigma(q)
\] is also a polynomial, $R(q)$, of degree at most $|U| \cdot n^2$.  We can obtain its coefficients by evaluating $\frac{\#\mathsf{LE}(P_{I, q})}{(n+3hq)!}$ at the points $1, 2, \ldots, |U|n^2 + 1$ (via the oracle) and $D$ at the same points (directly), then applying Lagrange interpolation, and finally putting the result into canonical form.
\end{proof}

It remains to show how to recover $\#\mathsf{X3C}$ from the rational function.

\begin{proposition}\label{prop:le_limit}
    For every instance $I = (\mathcal{S}, U)$ of $\#\mathsf{X3C}$ with $|U| = 3h$ it holds that
    \[
        \lim_{q \to -\frac{1}{3}} (3q+1)^h  \cdot \sum_{\sigma \in [n]!} \chi_{\sigma}(q) = \#\mathsf{X3C}(I).
    \]
\end{proposition}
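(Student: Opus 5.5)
The plan is to compute the limit one summand at a time, since the sum over $\sigma \in [n]!$ is finite. As in the proof of \Cref{prop:reduction}, let $C_{i,\sigma} := \sum_{j=1}^{i} c_{n-j+1}(\sigma)$, where $c_j$ now refers to last covering sets as in \Cref{lemma:le-formula}. Then $C_{i,\sigma}$ is the number of elements of $U(I)$ whose last covering set is among the last $i$ sets in the order $\sigma$, and
\[
\chi_\sigma(q) = \prod_{i=1}^n \frac{1}{i + qC_{i,\sigma}}.
\]
At $q=-\tfrac13$ the $i$-th factor has denominator $i - C_{i,\sigma}/3$. So the whole question is which indices $i$ satisfy $C_{i,\sigma}=3i$, because these are exactly the factors with a pole at $-\tfrac13$.

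First I would establish two bounds. Every element counted by $C_{i,\sigma}$ belongs to one of the last $i$ sets, which cover at most $3i$ elements, so $C_{i,\sigma} \le 3i$. Trivially also $C_{i,\sigma} \le |U| = 3h$.

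Next I would characterize the equality case. We have $C_{i,\sigma} = 3i$ exactly when the last $i$ sets are pairwise disjoint, each contributing $3$ distinct elements. This is because any element lying in one of these sets automatically has its last covering set among them. Equality then forces $3i \le 3h$, i.e.\ $i \le h$. Moreover, pairwise disjointness of the last $i$ sets implies it for the last $i'$ sets whenever $i' \le i$. Hence the set of vanishing factors is $\{1,\dots,k\}$ for some $k \le h$, and $\chi_\sigma$ has a pole of order exactly $k$ at $-\tfrac13$.

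Now split into cases. If $k<h$, then $(3q+1)^h\chi_\sigma(q)$ has a zero of order $h-k \ge 1$ at $-\tfrac13$, because the remaining factors are finite and nonzero there; so its limit is $0$.

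If $k=h$, the last $h$ sets are pairwise disjoint $3$-sets, so they cover all $3h$ elements and form an exact cover. For $i \le h$ we have $i + qC_{i,\sigma} = i(3q+1)$, so these factors together contribute $1/h!$ after multiplying by $(3q+1)^h$. For $i>h$ we have $C_{i,\sigma}=3h$, and the factor at $q=-\tfrac13$ equals $1/(i-h)$. Together these give $1/(n-h)!$. Hence
\[
\lim_{q\to -1/3}(3q+1)^h\chi_\sigma(q) = \frac{1}{h!\,(n-h)!}.
\]

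Finally, each exact cover corresponds to exactly $h!\,(n-h)!$ permutations whose last $h$ sets are that cover. Summing the per-$\sigma$ limits therefore yields $\#\mathsf{X3C}(I)$.

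The main obstacle is the equality characterization above. It must rule out poles of order $h$ arising from anything other than an exact cover occupying the final $h$ positions. The ``last covering set'' convention is precisely what makes $C_{i,\sigma}\le 3i$, with equality only for disjoint sets, work cleanly here.
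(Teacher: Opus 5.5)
Your proof is correct and follows essentially the same route as the paper. Both take the limit termwise over $\sigma$, show that $(3q+1)^h\chi_\sigma(q)\to 1/(h!\,(n-h)!)$ when the last $h$ sets form an exact cover and $\to 0$ otherwise, and then count $h!\,(n-h)!$ permutations per cover. Your pole-order parametrization by the initial segment $\{1,\dots,k\}$ is a repackaging of the paper's set $B_\sigma=\{i\in[h] : C_{i,\sigma}<3i\}$, and it usefully makes explicit the bound $C_{i,\sigma}\le 3i$ (and $C_{i,\sigma}\le 3h<3i$ for $i>h$), which the paper uses only tacitly.
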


\begin{proof}
Let $C_{i, \sigma} := \sum_{j=1}^i c_{n-j+1}(\sigma)$ be the number of elements whose last covering set appears within the last $i$ sets according to $\sigma$. We thus have
\[
    \chi_{\sigma}(q) =  \prod_{i=1}^n \frac{1}{i + q C_{i, \sigma} }.
\]
Let $\sigma$ be a permutation such that its last $h$ sets are an exact cover. Then, the last $h$ sets are all pairwise disjoint, and thus $C_{i, \sigma} = 3i$ for $1 \leq i \leq h$, and for $i > h$ we simply have $C_{i, \sigma} = 3h$.
Then,
 \begin{align*}
     \lim_{q \to -\frac{1}{3}} (3q+1)^h\chi_{\sigma}(q) &= \lim_{q \to -\frac{1}{3}}  (3q+1)^h \prod_{i=1}^{h} \frac{1}{i + q\cdot 3i }  \cdot \prod_{i=h+1}^{n} \frac{1}{i + q\cdot 3h }\\
     &= \lim_{q \to -\frac{1}{3}}   \prod_{i=1}^{h} \frac{(3q+1)}{i(3q+1)}  \cdot \prod_{i=h+1}^{n} \frac{1}{i + q\cdot 3h }\\
     &= \frac{1}{h!} \lim_{q \to -\frac{1}{3}}  \prod_{i=h+1}^{n} \frac{1}{i + q\cdot 3h }\\
     &= \frac{1}{h!} \prod_{i=h+1}^{n} \lim_{q \to -\frac{1}{3}} \frac{1}{i + q\cdot 3h }\\
     &= \frac{1}{h!} \cdot \prod_{i=h+1}^{n} \frac{1}{i - h } \\
     &= \frac{1}{h! (n-h)!}.
 \end{align*}
 Now let $\sigma$ be a permutation such that its last $h$ sets are \emph{not} an exact cover. Then, define \[B_\sigma := \{ i \in [h] \mid C_{i, \sigma} < 3i \}.\] The fact that the last $h$ sets of $\sigma$ are not an exact cover implies that $|B_\sigma| > 0$.
 We thus have
\begin{align*}
      \lim_{q \to -\frac{1}{3}} (3q+1)^h\chi_{\sigma}(q) &= \lim_{q \to -\frac{1}{3}}  (3q+1)^h \prod_{i \in B_\sigma} \frac{1}{i + q \cdot C_{i, \sigma}}  \cdot \prod_{i \in [h] \setminus B_\sigma} \frac{1}{i + q \cdot 3i}  \cdot \prod_{i=h+1}^{n} \frac{1}{i + q\cdot C_{i, \sigma} }\\
      &= \lim_{q \to -\frac{1}{3}}  (3q+1)^{|B_\sigma|} \prod_{i \in B_\sigma} \frac{1}{i + q \cdot C_{i, \sigma}}  \cdot \prod_{i \in [h] \setminus B_\sigma} \frac{(3q+1)}{i(3q+1)}  \cdot \prod_{i=h+1}^{n} \frac{1}{i + q\cdot C_{i, \sigma} }\\
        &= \left(\prod_{i \in [h] \setminus B_\sigma} \frac{1}{i}\right) \cdot \lim_{q \to -\frac{1}{3}} (3q+1)^{|B_\sigma|} \prod_{i \in B_\sigma} \frac{1}{i + q \cdot C_{i, \sigma}}  \cdot \prod_{i=h+1}^{n} \frac{1}{i + q\cdot C_{i, \sigma} } \\
      &= \left(\prod_{i \in [h] \setminus B_\sigma} \frac{1}{i}\right) \cdot \lim_{q \to -\frac{1}{3}}  (3q+1)^{|B_\sigma|} \cdot \left(\prod_{i \in B_\sigma \cup \{h+1, \dots, n\}} \frac{3}{3i - C_{i, \sigma}}\right)\\
      &= \left(\prod_{i \in [h] \setminus B_\sigma} \frac{1}{i}\right) \cdot 0 \cdot \left(\prod_{i \in B_\sigma \cup \{h+1, \dots, n\}} \frac{3}{3i - C_{i, \sigma}}\right)\\
      &= 0.
\end{align*}
Let $X \subseteq [n]!$ be the set of permutations $\sigma$ whose last $h$ sets form an exact cover. Notice that each exact cover corresponds with exactly $(n-h)!h!$ permutations in $X$. Therefore,
\[
\lim_{q \to -\frac{1}{3}} (3q+1)^h  \cdot \sum_{\sigma \in [n]!} \chi_{\sigma}(q) = \sum_{\sigma \in [n]!} \lim_{q \to -\frac{1}{3}} (3q+1)^h \chi_{\sigma}(q) = \sum_{\sigma \in X}{\frac{1}{(n-h)!h!}} = \#\mathsf{X3C}(I).
\]    
\end{proof}

With these ingredients, we can now give an alternative proof of the following result.

\begin{thm}\label{thm:h2le}
$\#\mathsf{LE}$ is $\#\mathsf{P}$-hard, even for height $2$ posets.
\end{thm}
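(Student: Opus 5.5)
The plan mirrors the proof of \Cref{thm:main2}: a polynomial-time Turing reduction from $\#\mathsf{X3C}$. Given an instance $I=(\mathcal{S},U)$ with $|U|=3h$, we build the posets $P_{I,q}$ for $q=1,\dots,|U|n^2+1$. Each of them has height $2$: every relation goes from a first-layer element $\ns_j$ to a copy $u_i^{(k)}$, and there are no relations inside a layer. Each has $n+3hq$ elements, so it can be written down in time polynomial in $|I|$ for these values of $q$. We query the oracle for $\#\mathsf{LE}(P_{I,q})$ on each of them.

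By \Cref{lemma:le-formula}, summed over $\sigma\in[n]!$, each answer divided by $(n+3hq)!$ equals $\sum_{\sigma}\chi_\sigma(q)$. \Cref{prop:le_compute-polynomials} then gives, in polynomial time, polynomials $R$ and $D$ in canonical form with $\sum_\sigma \chi_\sigma(q)=R(q)/D(q)$ as rational functions. The integers involved (factorials of size $n+3hq$, oracle outputs) have polynomially many bits, so the Lagrange interpolation runs in polynomial time.

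By \Cref{prop:le_limit},
\[
\#\mathsf{X3C}(I)=\lim_{q\to -1/3}\frac{(3q+1)^h R(q)}{D(q)}.
\]
It remains to evaluate this limit symbolically. Let $m$ be the multiplicity of the root $-1/3$ in $D$; explicitly, $m=n^2$, coming from the factors $(3q+\ell)^n$ with $\ell\le n$ divisible by $3$. Compute this exactly from the explicit product, or obtain it by repeated exact polynomial division of $D$ by $(3q+1)$ over $\mathbb{Q}$. Likewise, divide $(3q+1)^hR(q)$ by $(3q+1)$ as many times as possible, up to $m$ times. Because \Cref{prop:le_limit} guarantees that the limit exists and is finite, after cancelling the common powers the denominator $\tilde D$ has no root at $-1/3$. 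The answer is then $\tilde R(-1/3)/\tilde D(-1/3)$, evaluated in exact rational arithmetic. All degrees are at most $|U|n^2+h$ and all coefficients have polynomial bit-length, so this step is polynomial.

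The step needing the most care is this symbolic limit at a finite point, as opposed to $\infty$ in \Cref{thm:main2}. There one only compares leading coefficients; here one has to perform exact cancellation of the factor $(3q+1)$ and keep control of the bit sizes. Repeated division by a linear factor with rational coefficients (synthetic division at $-1/3$) handles both issues. Finally, $\#\mathsf{X3C}$ is $\#\mathsf{P}$-complete, so computing it with polynomially many oracle calls gives $\#\mathsf{P}$-hardness of $\#\mathsf{LE}$ restricted to height-$2$ posets.
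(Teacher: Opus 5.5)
Your proposal is correct and follows the paper's proof. It uses the same reduction through $P_{I,q}$, \Cref{lemma:le-formula}, \Cref{prop:le_compute-polynomials} and \Cref{prop:le_limit}. The only difference is that you evaluate the limit at $-1/3$ by exactly cancelling powers of $(3q+1)$, where the paper applies L'H\^opital's rule repeatedly; the two amount to the same computation.

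One slip needs fixing. The multiplicity of $-1/3$ as a root of $D$ is $hn$, not $n^2$. It comes from the factors $(3\ell q+\ell)^n=\ell^n(3q+1)^n$ for $\ell=1,\dots,h$, because $C\le |U|=3h$ and $h<n$. So you should obtain $m$ by the repeated-division route you also mention, not from your closed form.
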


\begin{proof}[Proof of~\Cref{thm:h2le}]
    Let $I$ be an arbitrary instance of $\#\mathsf{X3C}$.
    By~\Cref{prop:le_compute-polynomials} and ~\Cref{prop:le_limit}, we can compute in polynomial time polynomials $R$ and $D$ such that
    \[
        \#\mathsf{X3C}(I) = \lim_{q \to -\frac{1}{3}} \frac{(3q+1)^h R(q)}{D(q)}.
    \]
    Since
$\frac{(3q+1)^h R(q)}{D(q)}$
is a ratio of polynomials whose size is polynomial in the size of $I$, the limit
$\lim_{q \to -\frac{1}{3}} \frac{(3q+1)^h R(q)}{D(q)}$
can be computed by simply evaluating the numerator and denominator at $q=-\frac{1}{3}$, except when this evaluation yields the indeterminate form $\frac{0}{0}$. In that case, we repeatedly apply L'Hôpital's rule until the resulting limit is no longer of the form $\frac{0}{0}$, after which it can be computed by evaluating the resulting polynomials at $q=-\frac{1}{3}$.
\end{proof}

\section{Shellings}

In this section we prove~\Cref{thm:shellings}. For simplicity, we will first prove a \emph{rooted} version, and then show how to get rid of this assumption. More precisely, we will first consider the function $\#\mathsf{Shellings}(G; r)$ that counts the number of shellings of $G$ whose first edge is incident to the vertex $r$. Thus, we aim to prove the following result.
\begin{thm}\label{thm:shellings2}
Given a bipartite graph $G = (V, E)$ and a vertex $r \in V$, 
computing $\#\mathsf{Shellings}(G; r)$ is $\sharpP$-hard.
\end{thm}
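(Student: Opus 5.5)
The plan is to build, from an instance $I=(\mathcal{S},U)$ of $\#\mathsf{X3C}$, a family of bipartite graphs $G_{I,q}$ in which $q$ again controls an amplifying gadget, and to redo the three steps used for \Cref{thm:main2}: a hook-length formula per ``core ordering'', polynomial interpolation of a rational function, and a symbolic limit $q\to\infty$.

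\textbf{Construction.} Copying element vertices as in $G_{I,q}$ does not work for shellings. The edges are now the objects being ordered, so any vertex adjacent to several set vertices lets shellings ``chain'' from one set to another. To make chaining expensive, I use subdivided paths. Create the root $r$ and, for each $S_j$, a vertex $s_j$ with an edge $rs_j$. For each element $u\in U$, create one vertex $w_u$. For each incidence $u\in S_j$, join $s_j$ to $w_u$ by a path of length $p$; a small constant such as $p=2$ should suffice. Finally attach $q$ pendant edges to each $w_u$. Every cycle has even length: either $2p\cdot k$, or $2+2p$ through $r$. Hence the graph is bipartite. Call the $m_0$ non-pendant edges the \emph{core}; $m_0$ does not depend on $q$.

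\textbf{Per-ordering formula and interpolation.} Fix an ordering $\tau$ of the core edges that is itself a shelling of the core graph whose first edge is incident to $r$. A pendant edge at $w_u$ becomes available exactly when the first core edge touching $w_u$ has been placed. So, as in \Cref{lemma:svo-formula}, the shellings projecting to $\tau$ are the linear extensions of a rooted tree poset: the core edges form a chain, and each pendant edge hangs below the first core edge touching its $w_u$. \Cref{lemma:tree-le} then gives
\[
\frac{\#\mathsf{Shellings}(G_{I,q};r)}{(m_0+3hq)!}=\sum_{\tau}\prod_{i=1}^{m_0}\frac{1}{i+q\,C_{i,\tau}},
\]
where $C_{i,\tau}$ is the number of elements $u$ whose first touching core edge lies among the last $i$ edges of $\tau$. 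There are exponentially many $\tau$, but every factor has the form $Cq+\ell$ with $0\le C\le 3h$ and $1\le\ell\le m_0$. The same denominator $D(q)=\prod_{C,\ell}(Cq+\ell)^{m_0}$ therefore works verbatim, and \Cref{prop:compute-polynomials} goes through unchanged with $m_0$ in place of $n$.

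\textbf{The limit (main obstacle).} A factor with $C_{i,\tau}>0$ contributes order $1/q$. Let $P_\tau$ be the length of the shortest prefix of $\tau$ after which every $w_u$ has been touched. Then $\prod_i 1/(i+qC_{i,\tau})=\Theta(q^{-P_\tau})$. The key combinatorial claim is that $\min_\tau P_\tau=h+3hp$, attained exactly when the prefix consists of the $h$ root edges of an exact cover together with the $3h$ full paths from each chosen $s_j$ to its elements. The argument for the claim: touching $3h$ distinct $w_u$ requires traversing at least $3h$ full paths into them. Entering a set vertex costs either a root edge (cost $1$) or a full path out of some $w_u$ (cost $p>1$). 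Each set reaches at most $3$ elements, so at least $h$ set vertices are needed. Equality forces exactly $h$ root edges, no wasted or partial paths, and pairwise disjoint sets.

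Consequently $\lim_{q\to\infty}q^{h+3hp}\sum_\tau\chi_\tau(q)$ kills every $\tau$ with $P_\tau>h+3hp$. For an optimal $\tau$, the limit equals $\frac{1}{(m_0-P)!}\prod_{i}\frac{1}{C_{i,\tau}}$ over the prefix positions. The main technical work is to verify that, summed over all optimal $\tau$ corresponding to a fixed exact cover, this gives a constant $\kappa$. That constant must depend only on $h,p,m_0$, which holds by symmetry of the cover structure; it must be positive and computable, for instance by hook-length on the fixed prefix forest. Then $\#\mathsf{X3C}(I)=\kappa^{-1}\lim_{q\to\infty} q^{h+3hp}R(q)/D(q)$, read off from leading coefficients as in the proof of \Cref{thm:main2}. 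If $p=2$ turns out too small to rule out some chaining pattern, I would increase $p$, which only strengthens the penalty.
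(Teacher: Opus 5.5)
Your plan is correct and follows the paper's route: a hook-length formula per core ordering, interpolation, a limit $q\to\infty$ at the exponent given by the shortest prefix touching all element vertices, and a per-cover constant. The differences are in the gadget and in how the constant is obtained.

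\textbf{The gadget.} The subdivision is unnecessary. Your construction with $p=1$ is exactly the paper's $H_{I,q}$, and chaining is already harmless there. The paper's Claim~\ref{claim:cover_characterization} shows this by a degree count: $4h\le 4|Y|$ for the set $Y$ of touched set vertices, and connectivity gives $1+|Y|+3h\le 4h+1$. Your Steiner count also works at $p=1$. Once the $h$ chosen sets are disjoint, no set vertex can be entered from an element vertex.

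\textbf{The tail.} Taking $p\ge 2$ actually complicates the constant. For $p=1$ the $4n-4h$ core edges after a cover prefix may be ordered arbitrarily, so $(4n-4h)!$ cancels. For $p\ge 2$ this fails: an edge $\{s_j,x\}$ of a non-cover set needs $\{r,s_j\}$, another edge at $s_j$, or $\{x,w_u\}$ to come earlier. So each of the $n-h$ non-cover gadgets contributes an extra factor, namely $2/5$ when $p=2$, and $\kappa=3^{-h}(2/5)^{n-h}$. Your symmetry argument covers this, but the factor must be counted.

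\textbf{The prefix sum.} The paper obtains $3^{-h}$ (Claim~\ref{claim:omega-xc}) via a random process that gives root edges weight $3$ and available element edges weight $1$. The total weight then always equals the number of untouched elements. Your hook-length idea gives the same value more directly, as follows.
\begin{enumerate}
\item Every element vertex has a single edge in the cover tree $T_X$. Hence the rooted shellings of $T_X$ plus its $3hq$ pendant edges are the linear extensions of one fixed forest poset.
\item \Cref{lemma:tree-le} counts them as $\frac{(P+3hq)!}{(1+3p+3q)^h\prod_{k=1}^{p}(k+q)^{3h}}$, with $P=h+3hp$.
\item Equate this with your per-ordering decomposition of the same count, multiply by $q^{P}/(P+3hq)!$, and let $q\to\infty$. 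This gives $\sum_{\tau}\prod_i 1/C_{i,\tau}=3^{-h}$ for every $p$.
\end{enumerate}
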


\begin{proof}
    
We reduce again from $\#\mathsf{X3C}$, and let $I = (\mathcal{S}, U)$ be an input instance, with $|\mathcal{S}| = n$ and $|U| = 3h$.
We construct a graph $H_{I, q}$ defined as follows:
\begin{enumerate}
    \item Create a vertex $\ns_i$ for each $S_i \in \mathcal{S}$, and a vertex $\nou_j$ for each $u_j \in U$.
    \item Create an edge $\{\ns_i, \nou_j\}$ whenever $u_j \in S_i$.
    \item Create a vertex $r$ connected to every $\ns_i$.
    \item For each $u_j \in U$, create $q$ additional copies $u_j^{(k)}$ ($1 \leq k \leq q$), and connect them all to $\nou_j$.
\end{enumerate}
An illustration is presented in~\Cref{fig:shellings}. Note that $H_{I, q}$ is always bipartite.
Let $E_q := E(H_{I, q})$,
and let $E' \subsetneq E_q$ be the subset of edges introduced in steps (2) and (3). Note that $|E'| = 4|\mathcal{S}| = 4n$.
Also, for each permutation $\sigma$ of $E'$, we consider the number $\#\mathsf{Shellings}(H_{I, q}; r)_{\sigma}$ of shellings of $H_{I, q}$ whose first edge is incident to $r$ and whose projection down to $E'$ coincides with $\sigma$. We will write $\omsh$ to denote the subset of permutations $\sigma$ of $E'$ for which $\#\mathsf{Shellings}(H_{I, q}; r)_{\sigma} > 0$.

\begin{figure}
    \centering
    \input{figures/shellings}
    \caption{Illustration of $H_{I, q}$ for $q = 3$.}
    \label{fig:shellings}
\end{figure}

We now define a poset $T_{\sigma} := (E_q, \preceq)$ for each fixed $\sigma \in \omsh$. For ease of notation, let us write $e_1^\sigma, \dots, e_{4n}^\sigma$ for the edges in $E'$ ordered according to $\sigma$. Also, for an element $u_j \in U$, let \[
\rho_{\sigma}(u_j) := \min \{ i \mid e_{i}^\sigma = \{\ns_\ell, \nou_j\} \text{ for some } \ell \}.
\]Intuitively, $\rho_{\sigma}(u_j)$  is the earliest moment in which $\nou_j$ becomes connected to the shelling.
With this notation, the order $\preceq$ is defined by:
\begin{enumerate}
    \item $e_1^\sigma \preceq e_2^\sigma \preceq \dots  \preceq e_{4n-1}^\sigma \preceq e_{4n}^\sigma$.
    \item For each edge $e := \{\nou_j, u_j^{(k)}\} \in E_q \setminus E'$, we have $e_{\rho_\sigma(u_j)}^\sigma \preceq e$.
\end{enumerate}
Because $\sigma \in \omsh$, observe that the linear extensions of $T_\sigma$ are exactly the shellings of $H_{I, q}$ whose first edge is incident to $r$ and whose projection down to $E'$ coincides with $\sigma$. Thus, by~\Cref{lemma:tree-le}, we have
\[
\#\mathsf{Shellings}(H_{I, q}; r)_{\sigma} = (4n+3hq)! \prod_{i=1}^{4n}\frac{1}{i + q \cdot C_{i, \sigma}},
\]
where 
\[ 
C_{i, \sigma} := |\{ u_j \in U \mid   \rho_{\sigma}(u_j) \geq 4n + 1 -i \}|.
\]
Notice that $C_{i, \sigma} \leq C_{j, \sigma}$ whenever $i < j$. 

We now have that
\[
\frac{\#\mathsf{Shellings}(H_{I, q}; r)}{(4n+3hq)!} = \sum_{\sigma \in \omsh} \chi_{\sigma}(q),
\]
where
\[
\chi_{\sigma}(q) \coloneqq \prod_{i=1}^{4n}\frac{1}{i + q \cdot C_{i, \sigma}}.
\]
Now, let us say that a permutation $\sigma \in \omsh$ \emph{induces} an exact cover $X \subseteq \mathcal{S}$ if the first $4h$ edges according to $\sigma$ correspond in some ordering to:
all edges $\{r, \ns_i\}$ for $S_i \in X$, and all edges $\{\ns_i, \nou_j\}$ for $S_i \in X$, $u_j \in S_i$. Let $\omxc \subseteq \omsh$ be the subset of permutations that induce an exact cover. We will need the following claim regarding such permutations.

\begin{claim}\label{claim:cover_characterization}
Given any $\sigma \in \omsh$, the following statements hold:
\begin{enumerate}
\item If $\sigma \in \omxc$, then $C_{4n-4h, \sigma} = 0$ and $C_{4n-4h+1, \sigma} > 0$.
\item If $\sigma \notin \omxc$, then $C_{4n-4h, \sigma} > 0$.
\end{enumerate}
\end{claim}
\begin{proof}[Proof of~\Cref{claim:cover_characterization}]
Let $e_1, \ldots, e_{4n}$ be the edges of $E'$ ordered according to $\sigma$. 

First, suppose that $\sigma$ induces an exact cover $X \subseteq \mathcal{S}$. Then, by definition, the subgraph of $H_{I, q}$ induced by the first $4h$ edges of $\sigma$ already contains all vertices $\nou_j$, and therefore $C_{4n-4h, \sigma} = 0$.
Since $X$ is an exact cover, the subgraph of $H_{I, q}$ induced by the vertices
    \(
    \{ r \} \cup \{ \ns_i\ \mid S_i \in X\} \cup \{\nou_j \mid u_j \in U \}
    \)
is a tree. Therefore, since $\sigma$ is a shelling, the edge $e_{4h}$ must be incident to a leaf of that tree. Since $e_1$ is incident to the root $r$, we necessarily have that $e_{4h}$ is of the form $\{\ns_i, \nou_j\}$ for $S_i \in X$ and $u_j \in S_i$. This shows that $C_{4n-4h+1, \sigma} > 0$.

For the second part, we prove the contrapositive. Suppose that $\sigma \in \omsh$ is such that $C_{4n-4h, \sigma} = 0$. Then, the subgraph $H_{\sigma}$ of $H_{I, q}$ induced by the edges $e_1, \dots e_{4h}$ must contain all vertices $\nou_j$ for $u_j \in U$. Now let $Y$ be the set of vertices of the form $\ns_i$ belonging to $H_\sigma$. Because $H_\sigma$ is bipartite and $Y$ is one of the parts, we have that
\begin{equation}\label{eq:degrees}
      4h = \sum_{v \in Y}{\deg_{H_{\sigma}}(v)} \leq \sum_{v \in Y}{\deg_{H_{I, q}}(v)} = \sum_{v \in Y}{4} = 4|Y|,
      \end{equation}
and hence $|Y| \geq h$. On the other hand, since $\sigma$ is a shelling, we have that $H_\sigma$ is connected, and therefore $1+|Y|+3h$, its number of vertices, must be at most $4h+1$, which implies that $|Y| = h$. From equation~\eqref{eq:degrees} it follows that $\deg_{H_{I, q}}(v) = 4$ for every $v \in Y$, which shows that indeed $Y$ corresponds to an exact cover.
\end{proof} 

Now, the first part of \Cref{claim:cover_characterization} implies that, for any $\sigma \in \omxc $, we have $C_{i, \sigma} = 0$ if and only if $i \leq 4n - 4h$. Therefore, for any $\sigma \in \omxc$, we have
\begin{align*}
\lim_{q \to \infty} q^{4h} \chi_{\sigma}(q) &= \lim_{q \to \infty} q^{4h} \prod_{i=1}^{4n} \frac{1}{i + q C_{i, \sigma}} \\ &= \frac{1}{(4n-4h)!} \cdot \prod_{i=4n-4h+1}^{4n}\lim_{q \to \infty} \frac{q}{i+qC_{i, \sigma}}\\
&= \frac{1}{(4n-4h)!} \cdot \prod_{i=4n-4h+1}^{4n}\frac{1}{C_{i,\sigma}}.
\end{align*}

On the other hand, by the second part of \Cref{claim:cover_characterization}, for any $\sigma \in \omsh \setminus \omxc$, we have
\begin{align*}
    0 \leq \lim_{q \to \infty} q^{4h} \chi_{\sigma}(q) &= \lim_{q \to \infty} q^{4h} \left(\prod_{i = 1}^{4n-4h-1} \frac{1}{i + q C_{i, \sigma}}\right) \cdot \left(\prod_{i = 4n-4h}^{4n} \frac{1}{i + q C_{i, \sigma}}\right)\\
    &\leq \lim_{q \to \infty} q^{4h} \left(\prod_{i = 1}^{4n-4h-1} \frac{1}{i}\right) \cdot \left(\prod_{i = 4n-4h}^{4n} \frac{1}{i + q C_{i, \sigma}}\right) \\
    &= \frac{1}{(4n-4h-1)!} \lim_{q \to \infty} q^{-1}  \left(\prod_{i = 4n-4h}^{4n} \frac{q}{i + q C_{i, \sigma}}\right) \\
    &= \frac{1}{(4n-4h-1)!} \cdot \left(\prod_{i = 4n-4h}^{4n} \frac{1}{C_{i, \sigma}}\right) \cdot \lim_{q \to \infty} q^{-1} \\
    &= 0.
\end{align*}

To finish the proof we will rely on the following claim, whose proof uses a probabilistic argument and is deferred to~Appendix~\ref{subsec:claim_shellings}.
\begin{claim}\label{claim:omega-xc}
Fix an exact cover $X \subseteq \mathcal{S}$, and let $\omxc(X) \subseteq \omxc$ be the set of permutations that induce the exact cover $X$. Then,
    \[
\sum_{\sigma \in\omxc(X)} \prod_{t=1}^{4h} \frac{1}{C_{4n-4h+t, \sigma}} = (4n-4h)! \cdot 3^{-h}.
\]
\end{claim}
Using \Cref{claim:omega-xc}, we obtain the following:
\begin{align*}
   \lim_{q \to \infty} q^{4h} \sum_{\sigma \in \omsh} \chi_{\sigma}(q)
    &=  \lim_{q \to \infty}\sum_{\sigma \in \omxc} q^{4h} \chi_{\sigma}(q) + \lim_{q \to \infty}\sum_{\sigma \in \omsh \setminus \omxc} q^{4h} \chi_{\sigma}(q)\\
    &= \lim_{q \to \infty}\sum_{\sigma \in \omxc} q^{4h} \chi_{\sigma}(q)\\
    &= \lim_{q \to \infty} \sum_{X \in \mathsf{X3C}(I)} \sum_{\sigma \in \omxc(X)} q^{4h} \chi_{\sigma}(q)\\
    &= \sum_{X \in \mathsf{X3C}(I)} \sum_{\sigma \in \omxc(X)} \frac{1}{(4n-4h)!} \prod_{i=4n-4h+1}^{4n}\frac{1}{C_{i,\sigma}}\\
    &= \frac{1}{(4n-4h)!}\sum_{X \in \mathsf{X3C}(I)} \sum_{\sigma \in \omxc(X)} \prod_{t=1}^{4h}\frac{1}{C_{4n-4h+t,\sigma}}\\
    &= \frac{1}{(4n-4h)!}\sum_{X \in \mathsf{X3C}(I)} (4n-4h)! \cdot 3^{-h} \tag{\Cref{claim:omega-xc}}\\
    &= 3^{-h} \cdot \#\mathsf{X3C}(I).
\end{align*}
By the same argument as in~\Cref{thm:main}, we have that
\(
\sum_{\sigma \in \omsh} \chi_{\sigma}(q)
\)
is a rational function \(\frac{R(q)}{D(q)}\), where \(R(q)\) and \(D(q)\) are polynomials in canonical form that can be computed in polynomial time by evaluating 
$\frac{\#\mathsf{Shellings}(H_{I,q};r)}{(4n+3hq)!}$ in polynomially many positive integer values of $q$. Hence, the limit
\[
\lim_{q\to\infty} q^{4h}\sum_{\sigma \in \omsh} \chi_{\sigma}(q) \ = \
\lim_{q\to\infty} q^{4h}\frac{R(q)}{D(q)}
\]
can be computed in polynomial time via an oracle for counting rooted shellings. It follows that \(\#\mathsf{X3C}(I)\) can also be computed in polynomial time by multiplying this limit by \(3^h\). This concludes the proof of the theorem.
\end{proof}

Finally, we show how to reduce the rooted version to the unrestricted variant, thus concluding the proof of \Cref{thm:shellings}.

\begin{proposition}\label{prop:unrooting_shellings}
Given a bipartite graph $G = (V, E)$ and a vertex $r \in V$, it is possible to compute $\#\mathsf{Shellings}(G; r)$ in polynomial time, given access to an oracle for counting shellings of bipartite graphs.
\end{proposition}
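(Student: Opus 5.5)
The plan follows the proof of \Cref{prop:unrooting_SVO}: attach pendant edges at $r$, express the normalized number of shellings as a polynomial in the number of pendants, and read $\#\mathsf{Shellings}(G; r)$ off its leading coefficient. Let $m = |E|$; we may assume $r$ has at least one incident edge, since otherwise the rooted count is $0$. For $\ell \geq 1$, let $H_\ell$ be obtained from $G$ by adding $\ell$ new vertices $x_1, \dots, x_\ell$ and the pendant edges $\{r, x_k\}$. Placing the $x_k$ on the side opposite to $r$ keeps $H_\ell$ bipartite, so the oracle applies to it.

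The first step is a structural description of the shellings of $H_\ell$ in terms of shellings of $G$. I will show that the projection of any shelling of $H_\ell$ onto $E$ is a shelling of $G$. A pendant edge meets other edges only at $r$. So if a $G$-edge $e$ intersects an earlier pendant edge, then $e \ni r$. In that case, either $e$ is the first $G$-edge, or $e$ also meets the first $G$-edge, which is then forced to contain $r$ as well. For the same reason, if some pendant edge precedes the first $G$-edge, then the first $G$-edge contains $r$, so the projection is rooted at $r$.

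Conversely, fix a shelling $\pi$ of $G$ and let $p$ be the position in $\pi$ of the first edge incident to $r$. An interleaving of $\pi$ with the $\ell$ pendant edges (taken in any of their $\ell!$ orders) is a shelling of $H_\ell$ exactly under the following condition. If $p = 1$, every interleaving is valid, because $r$ belongs to the first $G$-edge; this gives $\binom{m+\ell}{\ell}\ell!$ extensions. If $p > 1$, the pendants must all come after the $p$-th $G$-edge. A pendant placed earlier either is the very first edge, which would force the next $G$-edge to contain $r$, or meets no earlier edge. This gives $\binom{\ell+m-p}{\ell}\ell!$ extensions.

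Summing over $\pi$, we obtain
\[
\frac{\#\mathsf{Shellings}(H_\ell)}{\ell!} = \#\mathsf{Shellings}(G; r)\binom{m+\ell}{\ell} + \sum_{\substack{\pi \in \mathsf{Shellings}(G)\\ p(\pi) > 1}} \binom{\ell+m-p(\pi)}{\ell}.
\]
Here $\binom{m+\ell}{\ell}$ is a polynomial in $\ell$ of degree $m$ with leading coefficient $1/m!$, and each $\binom{\ell+m-p}{\ell}$ with $p \geq 2$ is a polynomial of degree $m-p \leq m-2$. The right-hand side is therefore a polynomial $P(\ell)$ of degree at most $m$, whose $\ell^m$-coefficient is $\#\mathsf{Shellings}(G; r)/m!$.

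We query the oracle on $H_1, \dots, H_{m+1}$, divide each answer by $\ell!$, and recover $P$ in canonical form by Lagrange interpolation. Multiplying its leading coefficient by $m!$ yields $\#\mathsf{Shellings}(G; r)$. All of this takes polynomially many oracle calls and arithmetic on numbers of polynomial bit-length.

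The main obstacle is the characterization in the first step, in particular verifying that a pendant edge placed before the $p$-th $G$-edge is invalid when $p > 1$, and that the projection of every shelling of $H_\ell$ is always a shelling of $G$. The degree bookkeeping that follows is routine.
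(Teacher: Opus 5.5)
Your proposal is correct and follows essentially the paper's route: pendant edges at $r$, the same identity $\#\mathsf{Shellings}(H_\ell)/\ell! = \#\mathsf{Shellings}(G;r)\binom{m+\ell}{\ell} + \sum_{p(\pi)>1}\binom{\ell+m-p(\pi)}{\ell}$, and reading off the leading coefficient after interpolation. The only differences are cosmetic: you count the rooted case directly as all $\binom{m+\ell}{\ell}$ interleavings instead of splitting on whether a pendant edge comes first and recombining via Pascal's rule, and you interpolate on $H_1,\dots,H_{m+1}$ instead of $H_0,\dots,H_M$.

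One sentence of your projection argument needs repair. The first $G$-edge need not contain $r$: for the path with edges $ab, br, rc$, the shelling $ab, br, rx_1, rc$ of $H_1$ has $rc$ meeting the pendant but not the first $G$-edge $ab$. Argue instead that the earliest pendant edge is either the first edge overall, which forces the first $G$-edge to contain $r$, or is preceded by some $G$-edge through $r$. In both cases every later $G$-edge through $r$ meets an earlier $G$-edge.
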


\begin{proof}
The reduction is essentially the same as the one presented in the proof of~\Cref{prop:unrooting_SVO}. Let $M$ be the number of edges of $G$. We construct another bipartite graph $H_\ell$ by adding $\ell$ additional vertices to $G$, called the \emph{leaves} of $H_\ell$, all of which are exclusively connected to $r$. We refer to the $\ell$ new edges created by this gadget as the \emph{pendant edges} of $H_\ell$. There are only two kinds of shellings for $H_\ell$: those that start with a pendant edge and those that do not. For the first kind, we claim that their number is
    \[
    \ell \cdot \# \mathsf{Shellings}(G; r) \cdot \binom{\ell+M-1}{\ell-1} (\ell-1)!.
    \]
Indeed, there are $\ell$ options for the initial pendant edge. The remaining ordering, when projected down to $E(G)$, must be a shelling of $G$, and the $\ell-1$ remaining pendant edges can be inserted in any of the remaining $\ell+M-1$ positions, in any of their $(\ell-1)!$ possible permutations.

For the second kind, given a shelling $\pi$ of $G$, let $\pi_r$ be the first position $i \in [M]$ at which $r$ is an endpoint of the $i$-th edge of the shelling $\pi$. Then, $\pi$ can be extended to a shelling of $H_\ell$ by adding the $\ell$ pendant edges in any of the $\ell+M - \pi_r$ positions after $\pi_r$, in any order. Therefore, the second kind is counted by
    \[
    \sum_{\pi \in \mathsf{Shellings}(G)} \binom{\ell+M - \pi_r}{\ell} \ell!.
    \]
Separating the case $\pi_r = 1$, we obtain
    \[
    \sum_{\pi \in \mathsf{Shellings}(G)} \binom{\ell+M - \pi_r}{\ell} \ell! = \# \mathsf{Shellings}(G; r) \binom{\ell+M-1}{\ell} \ell!+ \sum_{\substack{\pi \in \mathsf{Shellings}(G)\\ \pi_r > 1}} \binom{\ell+M - \pi_r}{\ell} \ell!
    \]
Now we recall the identity
$$\binom{\ell+M-1}{\ell-1} +  \binom{\ell+M-1}{\ell} = \binom{\ell+M}{\ell},$$
and we use it to obtain that
    \[
    \frac{\#\mathsf{Shellings}(H_\ell)}{\ell!} = \# \mathsf{Shellings}(G; r) \binom{\ell+M}{\ell} + \sum_{\substack{\pi \in \mathsf{Shellings}(G)\\ \pi_r > 1}} \binom{\ell+M - \pi_r}{\ell}.
    \]

Now, observe that the right side of the previous equation is a polynomial in $\ell$, whose degree is at most $M$. Therefore, by evaluating 
    \[
    \frac{\#\mathsf{Shellings}(H_0)}{0!}, \frac{\#\mathsf{Shellings}(H_1)}{1!}, \dots, \frac{\#\mathsf{Shellings}(H_{M})}{M!}
    \]
    via an oracle for $\#\mathsf{Shellings}(\cdot)$, and using interpolation, we recover the coefficients of said polynomial. But notice that the leading term is precisely $\frac{1}{M!}\#\mathsf{Shellings}(G; r)$. Therefore, we have shown that polynomially many queries to an oracle for $\#\mathsf{Shellings}(\cdot)$ allow us to compute $\#\mathsf{Shellings}(G; r)$, and thus by~\Cref{thm:shellings2} we conclude the proof.
\end{proof}

\section{Linear extensions of N-Free Posets}

In this section we prove~\Cref{thm:nfree}. The reduction is essentially the same as the one we used for proving hardness of $\#\mathsf{Shellings}$.

Let $I = (\mathcal{S}, U)$ be an instance of $\#\mathsf{X3C}$, with $|\mathcal{S}| = n$ and $|U| = 3h$.
We construct a directed acyclic graph $A_{I, q}$ defined as follows:
\begin{enumerate}
    \item Create a vertex $\ns_i$ for each $S_i \in \mathcal{S}(I)$, and a vertex $\nou_j$ for each $u_j \in U(I)$.
    \item Create an edge $(\ns_i, \nou_j)$ whenever $u_j \in S_i$.
    \item Create a vertex $r$, and for every $S_i \in \mathcal{S}(I)$, create an edge $(r, \ns_i)$.
    \item For each $u_j \in U(I)$, create $q$ additional copies $u_j^{(k)}$ ($1 \leq k \leq q$) and the edges $(\nou_j, u_j^{(k)})$.
    \item For each $S_i \in \mathcal{S}(I)$, create $q$ additional copies $s_i^{(k)}$ ($1 \leq k \leq q$) and the edges $(\ns_i, \ns_i^{(k)})$.
\end{enumerate}
Now let $P_{I, q}$ be the edge poset of $A_{I, q}$. Note that $P_{I, q}$ has height $3$, and since $A_{I, q}$ is a DAG, the characterization of~\cite{FelsnerManneville} implies $P_{I, q}$ is always N-free. 
Although the proof will involve the quantities $\#\mathsf{LE}(P_{I, q})$, we will reason over edge orderings of $A_{I, q}$ directly to exhibit the duality with shellings.


Let $E_q := E(A_{I, q})$,
and let $E' \subsetneq E_q$ be the subset of edges introduced in steps (2) and (3). Note that $|E'| = 4|\mathcal{S}| = 4n$.
Also, for each permutation $\sigma$ of $E'$, we consider the number $\#\mathsf{LE}(P_{I, q})_{\sigma}$ of linear extensions of $P_{I, q}$ whose projection down to $E'$ coincides with $\sigma$. We will write $\omle$ to denote the subset of permutations $\sigma$ of $E'$ for which $\#\mathsf{LE}(P_{I, q})_{\sigma} > 0$.

We now define a new poset $T_{\sigma} := (E_q, \preceq)$ for each fixed $\sigma \in \omle$. For ease of notation, let us write $e_1^\sigma, \dots, e_{4n}^\sigma$ for the edges in $E'$ ordered according to $\sigma$. Also, for an element $u_j \in U$, let \[
\rho_{\sigma}(u_j) := \max \{ i \mid e_{i}^\sigma = (\ns_\ell, \nou_j) \text{ for some } \ell \}.
\]
The order $\preceq$ is defined by:
\begin{enumerate}
    \item $e_1^\sigma \preceq e_2^\sigma \preceq \dots  \preceq e_{4n-1}^\sigma \preceq e_{4n}^\sigma$.
    \item For each edge $e := (\nou_j, u_j^{(k)}) \in E_q \setminus E'$, we have $e_{\rho_\sigma(u_j)}^\sigma \preceq e$.
    \item For each edge $e := (\ns_i, \ns_i^{(k)}) \in E_q \setminus E'$, we have $(r, \ns_i) \preceq e$.
\end{enumerate}
Because $\sigma \in \omle$, the linear extensions of $T_\sigma$ are exactly the linear extensions of $P_{I, q}$ whose projection down to $E'$ coincides with $\sigma$. Thus, by~\Cref{lemma:tree-le}, we have
\[
\#\mathsf{LE}(P_{I, q})_{\sigma} = (4n+3hq+nq)! \prod_{i=1}^{4n}\frac{1}{i + q \cdot C_{i, \sigma}},
\]
where 
\[ 
C_{i, \sigma} := |\{ u_j \in U \mid \rho_{\sigma}(u_j) \geq 4n + 1 -i \}| + |\{j \in [n] \mid (r, \ns_j) = e_k^\sigma \text{ for some } k \geq 4n+1-i\}| .
\]
Notice that $C_{i, \sigma} \leq C_{j, \sigma}$ whenever $i < j$. 
We again write 
\(
\chi_{\sigma}(q) \coloneqq \prod_{i=1}^{4n}\frac{1}{i + q \cdot C_{i, \sigma}},
\)
and thus have
\[
\frac{\#\mathsf{LE}(P_{I, q})}{(4n+3hq + nq)!} = \sum_{\sigma \in \omle} \chi_{\sigma}(q),
\]

Given an exact cover $X \subseteq \mathcal{S}$, define the edge set
$$E_X \coloneqq \{(r, \ns_i) \mid  S_i \in X\} \cup \{(\ns_i, \nou_j) \mid  S_i \in X, u_j \in S_i\}.$$

We will say that a permutation $\sigma \in \omle$ \emph{induces} the exact cover $X$, and write $\sigma \in \omxc(X)$, if the last $4h$ edges according to $\sigma$ correspond to the edges $E_X$ in some ordering. We will need the following claim regarding such permutations.

\begin{claim}\label{claim:le_cover_characterization}
Given any $\sigma \in \omle$, the following statements hold:
\begin{enumerate}
\item If $\sigma \in \omxc(X)$ for some exact cover $X$, then $C_{i, \sigma} = i$ for $i \in \{1, \dots, 4h\}$ and $C_{i, \sigma} < i$ for every $i \in \{4h+1, \dots, 4n\}$.
\item If $\sigma$ does not induce an exact cover, then the set $B_\sigma := \{ i \in [4h] \mid C_{i, \sigma} < i \}$ is non-empty.
\end{enumerate}
\end{claim}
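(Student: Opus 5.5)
The plan rests on one structural observation: every quantity counted by $C_{i,\sigma}$ has a distinct \emph{witness edge} among the last $i$ edges of $\sigma$. For an element $u_j$ with $\rho_\sigma(u_j)\ge 4n+1-i$, the witness is $e^\sigma_{\rho_\sigma(u_j)}=(\ns_\ell,\nou_j)$, which is the last edge of $\sigma$ entering $\nou_j$. For an index $j$ counted in the second term, the witness is $(r,\ns_j)$. Distinct counted objects have distinct witnesses, since the heads $\nou_j$ differ and the $r$-edges differ. Hence
\[
C_{i,\sigma}=\#\{\text{witness edges among the last } i \text{ edges of } \sigma\}\le i,
\]
where an edge $(r,\ns_j)$ is always a witness and an edge $(\ns_\ell,\nou_j)$ is a witness iff it is the last edge of $\sigma$ into $\nou_j$. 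Moreover, if the last $i_0$ edges already contain a non-witness edge, then $C_{i,\sigma}<i$ for every $i\ge i_0$. I would also use repeatedly that $\sigma\in\omle$ forces $(r,\ns_\ell)$ to precede every edge $(\ns_\ell,\nou_j)$.

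For part (1), let $\sigma\in\omxc(X)$. For $i\le 4h$, the last $i$ edges lie in $E_X$. Each $(r,\ns_\ell)$ is a witness. Each $(\ns_\ell,\nou_j)$ with $S_\ell\in X$ is also a witness: since $X$ is an exact cover, the other edges into $\nou_j$ come from sets not in $X$, and these sit among the first $4n-4h$ positions. So $C_{i,\sigma}=i$.

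For $i\ge 4h+1$, I look at the edge $e^\sigma_{4n-4h}$, which is not in $E_X$. It cannot be $(r,\ns_\ell)$ with $S_\ell\notin X$, because then the out-edges of $\ns_\ell$ would have to come later, but everything later lies in $E_X$. Thus $e^\sigma_{4n-4h}=(\ns_\ell,\nou_j)$ with $S_\ell\notin X$. This edge is not a witness, since the $X$-edge into $\nou_j$ comes later. By the observation above, $C_{i,\sigma}<i$ for all $i\ge 4h+1$.

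For part (2), I argue the contrapositive: assume $C_{i,\sigma}=i$ for all $i\le 4h$, so all of the last $4h$ edges are witnesses. Let $Y$ be the set of indices $\ell$ with $(r,\ns_\ell)$ among the last $4h$ edges, and let $w$ be the number of $\nou$-type edges there. Then $4h=|Y|+w$. Since these $w$ edges are witnesses with distinct heads, $w\le 3h$. By precedence, for each $\ell\in Y$ all three edges $(\ns_\ell,\nou_j)$ lie in the last $4h$ positions and are witnesses. Their heads are therefore pairwise distinct across all of $Y$, so the sets in $Y$ are pairwise disjoint and $w\ge 3|Y|$.

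Combining these, $4h=|Y|+w\le |Y|+3h$ gives $|Y|\ge h$, and $4h\ge 4|Y|$ gives $|Y|\le h$. Hence $|Y|=h$ and $w=3h=3|Y|$. So the last $4h$ edges are exactly $E_Y$, where $Y$ consists of $h$ disjoint $3$-sets, which must cover $U$. Thus $Y$ is an exact cover and $\sigma\in\omxc(Y)$. This contradicts the hypothesis of part (2), so $B_\sigma\ne\emptyset$.

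The step I expect to be the main obstacle is making the witness injection fully precise, in particular that $\rho_\sigma$ picks the \emph{last} in-edge. The counting in part (2) then follows from the same degree-counting idea used in \Cref{claim:cover_characterization}.
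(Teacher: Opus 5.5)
Your proof is correct and follows essentially the same route as the paper. Your witness-edge reformulation of $C_{i,\sigma}$ is an explicit version of the paper's observation that $C_{i+1,\sigma}-C_{i,\sigma}\in\{0,1\}$, and part (1) rests on the same fact that position $4n-4h$ cannot hold an edge $(r,\ns_k)$ with $S_k\notin X$. Part (2) uses the same counting: $|Y|\ge h$ from $|U|=3h$, then $|Y|=h$ from the out-edges forced into the last $4h$ positions, then an exact cover from the last-in-edge property. Your version is slightly more careful than the paper's, since you say explicitly why the $(\ns_\ell,\nou_j)$ edge at position $4n-4h$ contributes nothing (the $E_X$-edge into $\nou_j$ comes later), which the paper leaves implicit.
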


\begin{proof}[Proof of~\Cref{claim:le_cover_characterization}]
First, suppose that $\sigma \in \omxc(X)$ for some $X$. It is clear that $C_{i, \sigma} = i$ for $i \in \{1, \dots, 4h\}$. Now notice that the only way for $C_{4h+1, \sigma}$ to be greater than $C_{4h, \sigma}$ is if the edge at position $4n-4h$ according to $\sigma$ is of the form $(r, \ns_k)$, where $S_k \notin X$. However, that is not possible since the $3$ edges whose starting vertex is $\ns_k$ do not belong to $E_X$, but they must occur after the edge $(r, \ns_k)$ for $\sigma$ to be a linear extension. This shows that $C_{4h+1, \sigma} = 4h$. Also, since $C_{i+1, \sigma} - C_{i, \sigma} \in \{0, 1\}$ for every $i \in [4n-1]$, it follows that $C_{i, \sigma} < i$ for every $i \in \{4h+1, \dots, 4n\}$.

For the second part, suppose that $\sigma \in \omle$ does not induce an exact cover. For the sake of contradiction, suppose that $C_{i, \sigma} = i$ for every $i \in [4h]$. 
Let $X \subseteq \mathcal{S}$ be the set of elements $S_i$ for which the edge $(r, s_i)$ is among the last $4h$ edges according to $\sigma$. Note that, since $C_{i, \sigma} = i$ for every $i \in [4h]$ and $|U| = 3h$, necessarily $|X| \geq h$. Also, there are at least $3h$ edges of the form $(\ns_i, \nou_j)$ for $S_i \in X$, $u_j \in S_i$, and, since $\sigma \in \omle$, all of them must also be at the last $4h$ positions according to $\sigma$. This forces that $|X| = h$. Now, because we are assuming that $C_{i, \sigma} = i$ for every $i \in [4h]$, those $3h$ edges must be the final incoming edge of their corresponding element $u_j$. This means that $X$ is an exact cover, a contradiction. Therefore, it must be the case that $|B_\sigma| > 0$.

\end{proof}

Now let $\sigma \in \omxc(X)$ for some exact cover $X$. The first part of \Cref{claim:le_cover_characterization} implies that
\begin{align*}
     \lim_{q \to -1} (q+1)^{4h}\chi_{\sigma}(q) &= \lim_{q \to -1}  (q+1)^{4h} \prod_{i=1}^{4h} \frac{1}{i + q\cdot i }  \cdot \prod_{i=4h+1}^{4n} \frac{1}{i + q\cdot C_{i, \sigma} }\\
     &= \lim_{q \to -1}   \prod_{i=1}^{4h} \frac{(q+1)}{i(q+1)}  \cdot \prod_{i=4h+1}^{4n} \frac{1}{i + q\cdot C_{i, \sigma} }\\
     &= \frac{1}{(4h)!} \lim_{q \to -1}  \prod_{i=4h+1}^{4n} \frac{1}{i + q\cdot C_{i, \sigma} }\\
     &= \frac{1}{(4h)!} \cdot \prod_{i=4h+1}^{4n} \frac{1}{i - C_{i, \sigma}}.
\end{align*}

Now let $\sigma \in \omle$ be a permutation that does not induce an exact cover. By the second part of \Cref{claim:le_cover_characterization}, we have that $|B_\sigma| > 0$. Since $C_{i+1, \sigma} - C_{i, \sigma} \in \{0, 1\}$ for every $i \in [4n-1]$, it follows that $C_{i, \sigma} < i$ also holds for every $i \in \{4h+1, \dots, 4n\}$. We thus have that
\begin{align*}
     \lim_{q \to -1} (q+1)^{4h}\chi_{\sigma}(q) &= \lim_{q \to -1}  (q+1)^{4h} \prod_{i \in B_\sigma} \frac{1}{i + q \cdot C_{i, \sigma}}  \cdot \prod_{i \in [4h] \setminus B_\sigma} \frac{1}{i + q \cdot i}  \cdot \prod_{i=4h+1}^{4n} \frac{1}{i + q\cdot C_{i, \sigma} }\\
      &= \lim_{q \to -1}  (q+1)^{|B_\sigma|} \prod_{i \in B_\sigma} \frac{1}{i + q \cdot C_{i, \sigma}}  \cdot \prod_{i \in [4h] \setminus B_\sigma} \frac{(q+1)}{i(q+1)}  \cdot \prod_{i=4h+1}^{4n} \frac{1}{i + q\cdot C_{i, \sigma} }\\
        &= \left(\prod_{i \in [4h] \setminus B_\sigma} \frac{1}{i}\right) \cdot \lim_{q \to -1} (q+1)^{|B_\sigma|} \prod_{i \in B_\sigma} \frac{1}{i + q \cdot C_{i, \sigma}}  \cdot \prod_{i=4h+1}^{4n} \frac{1}{i + q\cdot C_{i, \sigma} } \\
      &= \left(\prod_{i \in [4h] \setminus B_\sigma} \frac{1}{i}\right) \cdot \lim_{q \to -1}  (q+1)^{|B_\sigma|} \cdot \left(\prod_{i \in B_\sigma \cup \{4h+1, \dots, 4n\}} \frac{1}{i - C_{i, \sigma}}\right)\\
      &= \left(\prod_{i \in [4h] \setminus B_\sigma} \frac{1}{i}\right) \cdot 0 \cdot \left(\prod_{i \in B_\sigma \cup \{4h+1, \dots, 4n\}} \frac{1}{i - C_{i, \sigma}}\right)\\
      &= 0.
\end{align*}

To finish the proof, we will need the following claim, whose proof is deferred to~Appendix~\ref{subsec:claim_nfree}.
\begin{claim}\label{claim:nfree_claim}
For every exact cover $X \subseteq \mathcal{S}$ it holds that
    \[
\sum_{\sigma \in\omxc(X)} \prod_{i=4h+1}^{4n} \frac{1}{i - C_{i, \sigma}} = \frac{(4h)!}{4^h3^{n-h}}.
\]
\end{claim}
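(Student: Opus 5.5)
The plan is to split each $\sigma \in \omxc(X)$ into its \emph{prefix} and its \emph{suffix}. The suffix is the last $4h$ edges, which are exactly $E_X$. The prefix $\tau$ is the first $4n-4h$ edges, which are the edges $\{(r,\ns_k)\}\cup\{(\ns_k,\nou_j) : u_j \in S_k\}$ for the $N := n-h$ sets $S_k \notin X$. I will show that the summand depends only on $\tau$, so the sum factors as (number of valid suffixes) $\times$ (a weighted sum over valid prefixes).

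\textbf{Step 1 (the suffix contributes $(4h)!/4^h$).} For $i \geq 4h+1$, every $u_j$ has its last incoming edge inside $E_X$, since $X$ covers everything and these edges come last. So the first summand of $C_{i,\sigma}$ equals $3h$. The $h$ edges $(r,\ns_k)$ with $S_k\in X$ also lie in the last $4h$ positions. Hence
\[
i - C_{i,\sigma} = (i-4h) - D_{i-4h}(\tau),
\]
where $D_m(\tau)$ is the number of edges of the form $(r,\ns_k)$ among the last $m$ positions of $\tau$. Therefore $m - D_m(\tau) = a_m(\tau)$, the number of edges of the form $(\ns_k,\nou_j)$ among the last $m$ positions of $\tau$. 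This value is independent of the suffix. The valid suffixes are the linear extensions of a forest of $h$ stars, each with root $(r,\ns_k)$ and three leaves. By \Cref{lemma:tree-le} (after adding a virtual root) there are $(4h)!/4^h$ of them. It then remains to show
\[
\sum_{\tau}\prod_{m=1}^{4N}\frac{1}{a_m(\tau)} = 3^{-N},
\]
where $\tau$ ranges over orderings in which each $(r,\ns_k)$ precedes its three edges $(\ns_k,\nou_j)$. Such $\tau$ are exactly the prefixes of elements of $\omle$ given this suffix, because no other constraints are active.

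\textbf{Step 2 (shuffle factorization).} Read $\tau$ backwards and give weight $1$ to each $u$-type edge and weight $0$ to each $r$-type edge. Then $a_m$ is the cumulative weight of the first $m$ elements read. The constraint poset is a disjoint union of $N$ blocks, each consisting of three $u$-edges followed by one $r$-edge in reversed order. I would replace the weight $0$ by $\varepsilon>0$ and invoke the standard shuffle identity: for positive weights,
\[
\sum_{\text{linear extensions}}\ \prod_k \frac{1}{W_k}
\]
over a disjoint union of posets, where $W_k$ is the cumulative weight, equals the product of the corresponding sums for each component. This follows from the exponential-clock representation,
\[
\prod_k \frac{1}{W_k} = \int_{0<t_1<\dots<t_n} e^{-\sum_j w_{j} t_j}\,dt,
\]
since the integrals over disjoint variable sets factor. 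All denominators stay positive at $\varepsilon=0$: the first element read is always a $u$-edge, so $a_1\ge1$. Hence both sides are continuous in $\varepsilon$ and we may let $\varepsilon\to 0$.

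\textbf{Step 3 (one block).} For a single block there are $3!$ orders of the three $u$-edges, each with denominators $1\cdot2\cdot3\cdot(3+\varepsilon)$. The block sum is therefore $6/(6(3+\varepsilon))\to 1/3$. As a check, for $N=1$ a direct computation gives $6\cdot\frac{1}{18}=\frac13$. Multiplying over the $N$ blocks gives $3^{-N}=3^{-(n-h)}$, and combining with Step 1 yields $\frac{(4h)!}{4^h3^{n-h}}$.

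The main obstacle I expect is Step 2. The weight-$0$ elements break the usual form of the shuffle identity, so the $\varepsilon$-perturbation and the continuity argument must be carried out carefully. If the integral representation proves awkward, I would instead prove the factorization by induction on $N$, peeling off the last element read and using the recursion $\frac{1}{W}\sum(\cdots)$ satisfied by both sides.
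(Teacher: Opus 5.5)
Your argument is correct and takes a different route from the paper. Your Step~1 is exactly the paper's reduction: $\omxc(X)$ splits as the product of the $(4h)!/4^h$ admissible orderings of $E_X$ with the admissible prefixes, and $i-C_{i,\sigma}=a_{i-4h}(\tau)$ for $i\ge 4h+1$. So everything comes down to $\sum_\tau\prod_{m=1}^{4(n-h)}a_m(\tau)^{-1}=3^{-(n-h)}$.

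You prove this last identity differently. The paper generates $\tau$ front to back by a random process. It picks an available edge $(r,s_k)$ with probability $3/Q$ and an available edge $(s_k,\tilde u_j)$ with probability $1/Q$, where $Q$ is the number of $u$-type edges not yet placed. It then checks that $Q_t=4n-t-C_{4n-t,\tau}$, deduces $\Pr[\tau]=3^{n-h}\prod_i(i-C_{i,\tau})^{-1}$, and sums the probabilities to $1$. You instead factor the sum over the $n-h$ independent blocks and evaluate one block by hand.

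Both arguments prove the weighted hook-length formula for forests, $\sum_\tau\prod_k W_k^{-1}=\prod_v w(T_v)^{-1}$. Here $w(T_v)$ is the total weight of the subtree at $v$: $3$ for an $r$-edge and $1$ for a $u$-edge. The paper's process is exactly the standard sampling proof of that formula. The paper's route buys the following:
\begin{itemize}
\item it is finite and self-contained;
\item it handles the weight-$0$ edges directly by giving them sampling weight $3$;
\item it reuses the same process as the shellings claim.
\end{itemize}
Your route makes the product form $3^{-(n-h)}$ transparent as independence across blocks and adapts immediately to other gadgets. The cost is an auxiliary lemma and a limit.

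One slip needs fixing. As written, $\int_{0<t_1<\dots<t_n}e^{-\sum_j w_jt_j}\,dt$ equals $\prod_k(w_k+\dots+w_n)^{-1}$, the reciprocals of suffix sums, not of your prefix sums $W_k$. For example, with $n=2$, $w_1=1$, $w_2=2$ it gives $1/6$, not $1/3$.
\begin{itemize}
\item The correct region is $t_1>t_2>\dots>t_n>0$, so the first element read gets the largest time.
\item With your orientation the integral would contain the factor $1/\varepsilon$. This is because the last element read is the first edge of $\tau$, which is an $r$-edge of weight $\varepsilon$.
\item Once the region is corrected, the factorization goes through as you describe. No perturbation is even needed: every prefix sum is at least $1$, so the integral converges at $\varepsilon=0$.
\item Your fallback induction also works directly at weight $0$. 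Every nonempty down-set of the reading-order poset contains a $u$-edge, so it has positive total weight.
\end{itemize}
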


Using \Cref{claim:nfree_claim}, we obtain the following:

\begin{align*}
   \lim_{q \to -1} (q+1)^{4h} \sum_{\sigma \in \omle} \chi_{\sigma}(q) &= \lim_{q \to -1} \sum_{X \in \mathsf{X3C}(I)} \sum_{\sigma \in \omxc(X)} (q+1)^{4h} \chi_{\sigma}(q) \\
    &= \frac{1}{(4h)!} \sum_{X \in \mathsf{X3C}(I)} \sum_{\sigma \in \omxc(X)} \prod_{i=4h+1}^{4n} \frac{1}{i - C_{i, \sigma}} \\
    &= \frac{1}{(4h)!} \sum_{X \in \mathsf{X3C}(I)} \frac{(4h)!}{4^h3^{n-h}} \tag{\Cref{claim:nfree_claim}} \\
    &= 4^{-h} \cdot 3^{h-n} \cdot \#\mathsf{X3C}(I).
\end{align*}
By the same argument as in~\Cref{thm:main}, we have that
\[\sum_{\sigma \in \omle} \chi_{\sigma}(q)\]
is equal to a rational function \(\frac{R(q)}{D(q)}\), where \(R(q)\) and \(D(q)\) are polynomials in canonical form that can be computed in polynomial time by evaluating 
$\frac{\#\mathsf{LE}(P_{I, q})}{(4n+3hq + nq)!}$ in polynomially many positive integer values of $q$. Hence, the limit
\[
\lim_{q\to -1} (q+1)^{4h}
\sum_{\sigma \in \omle} \chi_{\sigma}(q) \ = \ 
\lim_{q\to -1} (q+1)^{4h}\frac{R(q)}{D(q)}
\]
can be computed in polynomial time via such an oracle. Notice that this limit can be computed by simply evaluating the numerator and denominator at $q=-1$, except when this evaluation yields the indeterminate form $\frac{0}{0}$, in which case we repeatedly apply L'Hôpital's rule until the resulting limit is no longer of this form. It follows that \(\#\mathsf{X3C}(I)\) can also be computed in polynomial time by multiplying this limit by \(4^h \cdot 3^{n-h}\). This concludes the proof of the theorem.


\section{Concluding Remarks}

The hardness result for $\#\mathsf{SVO}$ can be strengthened in a different direction, as a straightforward modification of the proof gives $\#\mathsf{P}$-hardness for split graphs. 
Indeed, in the construction for rooted SVOs, one may turn the root vertex $r$ together with all set vertices $\ns_i$ into a clique, while keeping the copied element vertices $u_j^k$ as an independent set. Since, after the root is placed, all set vertices are immediately available, the whole argument applies without change. The reduction from rooted to unrooted SVOs also preserves this structure, as it only adds leaves adjacent to the root.

We point out that all the reductions could be adapted so that they start from an instance of the exact 2-cover problem ($\mathsf{X2C}$). The problem $\#\mathsf{X2C}$ is equivalent to counting perfect matchings of a graph, and it is a classical result of Valiant \cite{V79} that the latter problem is $\#\mathsf{P}$-complete. We chose to reduce from $\#\mathsf{X3C}$ because the proof is not significantly more complicated, and $\#\mathsf{X3C}$ is a more robust problem in the sense that it is $\#\mathsf{P}$-complete under parsimonious reductions and remains hard even in restricted settings like planar instances \cite{X3C}.

This suggests a natural open direction. Although planar $\#\mathsf{X3C}$ is $\#\mathsf{P}$-hard, our reductions do not preserve planarity. The first obstruction is the presence of a root vertex adjacent to all set vertices, which appears in several of the constructions and can be incompatible with some planar embedding of the incidence graph. For the reduction to linear extensions of posets of height $2$, where this root vertex is absent, there is a different obstruction: the amplification step creates many copies of the elements $u_i$ with the same neighborhood, and this can create subgraphs isomorphic to $K_{3, 3}$. Thus, the complexity of the planar variants of the counting problems studied here remains open, to the best of our knowledge.

\section*{Acknowledgments}
Part of this work has been funded by ANID - Millennium Science Initiative Program -
Code ICN17002. 
The second author was also financially supported by ANID (Doctorado Nacional, 2025, folio 21251617). This research is (partially) supported by the DARPA expMath program
through the DARPA CMO contract number HR0011262E028.

Initial discussions related to the proofs of \Cref{thm:main} and \Cref{thm:nfree} were assisted by the use of ChatGPT and Gemini. All outputs from these models were reviewed and edited by the authors, who take full responsibility for the correctness, originality, and integrity of this manuscript.

\bibliographystyle{alphaurl}

\bibliography{references}

\input{appendix}
\end{document}

%% file: figures/svo_vs_h2le.tex
\begin{tikzpicture}
    \node[draw, circle, fill=green!10] (r) at (0, 0) {$r$};

    \node[draw, circle, fill=blue!10] (a) at (2, 1) {$a$};

    \node[draw, circle, fill=blue!10] (b) at (2, -1) {$b$};

    \node[draw, circle, fill=orange!10] (c) at (4, 0) {$c$};

    \node[] (c-or) at (4.7, 0.4) {`or'};

    \draw[-] (r) -- (a);
    \draw[-] (r) -- (b);

    \draw[-] (c) -- (a);
     \draw[-] (c) -- (b);

\end{tikzpicture}

%% file: figures/svo_vs_h2le2.tex
\begin{tikzpicture}

    \node[draw, circle, fill=blue!10] (a) at (2, 1) {$a$};

    \node[draw, circle, fill=blue!10] (b) at (2, -1) {$b$};

    \node[draw, circle, fill=orange!10] (c) at (4, 0) {$c$};


     \node[] (c-or) at (4.7, 0.4) {`and'};

    \draw[-Latex] (a) -- (c);
     \draw[-Latex] (b) -- (c);

\end{tikzpicture}

%% file: figures/subfig1.tex
\begin{tikzpicture}[scale=0.7]
\node[draw, circle, fill=blue!10] (s0) at (2, 7*1.5- 1.5*0) {\small $S_1$};
\node[draw, circle, fill=red!30] (s1) at (2, 7*1.5- 1.5*1) {\small $S_2$};
\node[draw, circle, fill=red!30] (s2) at (2, 7*1.5- 1.5*2) {\small $S_3$};
\node[draw, circle, fill=blue!10] (s3) at (2, 7*1.5- 1.5*3) {\small $S_4$};
\node[draw, circle, fill=orange!10, inner sep=1pt] (e1) at (5, 8*1.5- 1.5*0) {\small $u_1$};
\node[draw, circle, fill=orange!10, inner sep=1pt] (e2) at (5, 8*1.5- 1.5*1) {\small $u_2$};
\node[draw, circle, fill=orange!10, inner sep=1pt] (e3) at (5, 8*1.5- 1.5*2) {\small $u_3$};
\node[draw, circle, fill=orange!10, inner sep=1pt] (e4) at (5, 8*1.5- 1.5*3) {\small $u_4$};
\node[draw, circle, fill=orange!10, inner sep=1pt] (e5) at (5, 8*1.5- 1.5*4) {\small $u_5$};
\node[draw, circle, fill=orange!10, inner sep=1pt] (e6) at (5, 8*1.5- 1.5*5) {\small $u_6$};
\begin{scope}[on background layer]
\draw[] (s0) -- (e1);
\draw[] (s0) -- (e2);
\draw[] (s0) -- (e4);
\draw[very thick, red!50!gray] (s1) -- (e3);
\draw[very thick,red!50!gray] (s1) -- (e4);
\draw[very thick,red!50!gray] (s1) -- (e6);
\draw[very thick, red!50!gray] (s2) -- (e1);
\draw[very thick, red!50!gray] (s2) -- (e2);
\draw[very thick, red!50!gray] (s2) -- (e5);
\draw[] (s3) -- (e2);
\draw[] (s3) -- (e5);
\draw[] (s3) -- (e6);
\end{scope}
\end{tikzpicture}

%% file: figures/subfig2.tex
\begin{tikzpicture}[scale=0.7]
\node[draw, circle, fill=green!10] (r) at (0, 10.5 - 1.5*1.5) {$r$};
\node[draw, circle, fill=blue!10] (s0) at (2, 7*1.5- 1.5*0) {\small $\ns_1$};
\node[draw, circle, fill=blue!10] (s1) at (2, 7*1.5- 1.5*1) {\small $\ns_2$};
\node[draw, circle, fill=blue!10] (s2) at (2, 7*1.5- 1.5*2) {\small $\ns_3$};
\node[draw, circle, fill=blue!10] (s3) at (2, 7*1.5- 1.5*3) {\small $\ns_4$};
\node[draw, circle, fill=orange!10, inner sep=0pt] (e1_0) at (5.5, 12.93) {\tiny $u_1^{1}$};
\node[draw, circle, fill=orange!10, inner sep=0pt] (e1_1) at (5.19, 12.379999999999999) {\tiny $u_1^{2}$};
\node[draw, circle, fill=orange!10, inner sep=0pt] (e1_2) at (5.81, 12.379999999999999) {\tiny $u_1^{3}$};
\node[draw, dotted, very thick, circle, fill=yellow!10, opacity=0.4, inner sep=10pt] (1_top) at (5.5, 12.579999999999998) {};
\node[draw, circle, fill=orange!10, inner sep=0pt] (e2_0) at (5.5, 11.18) {\tiny $u_2^{1}$};
\node[draw, circle, fill=orange!10, inner sep=0pt] (e2_1) at (5.19, 10.629999999999999) {\tiny $u_2^{2}$};
\node[draw, circle, fill=orange!10, inner sep=0pt] (e2_2) at (5.81, 10.629999999999999) {\tiny $u_2^{3}$};
\node[draw, dotted, very thick, circle, fill=yellow!10, opacity=0.4, inner sep=10pt] (2_top) at (5.5, 10.829999999999998) {};
\node[draw, circle, fill=orange!10, inner sep=0pt] (e3_0) at (5.5, 9.43) {\tiny $u_3^{1}$};
\node[draw, circle, fill=orange!10, inner sep=0pt] (e3_1) at (5.19, 8.879999999999999) {\tiny $u_3^{2}$};
\node[draw, circle, fill=orange!10, inner sep=0pt] (e3_2) at (5.81, 8.879999999999999) {\tiny $u_3^{3}$};
\node[draw, dotted, very thick, circle, fill=yellow!10, opacity=0.4, inner sep=10pt] (3_top) at (5.5, 9.079999999999998) {};
\node[draw, circle, fill=orange!10, inner sep=0pt] (e4_0) at (5.5, 7.68) {\tiny $u_4^{1}$};
\node[draw, circle, fill=orange!10, inner sep=0pt] (e4_1) at (5.19, 7.13) {\tiny $u_4^{2}$};
\node[draw, circle, fill=orange!10, inner sep=0pt] (e4_2) at (5.81, 7.13) {\tiny $u_4^{3}$};
\node[draw, dotted, very thick, circle, fill=yellow!10, opacity=0.4, inner sep=10pt] (4_top) at (5.5, 7.33) {};
\node[draw, circle, fill=orange!10, inner sep=0pt] (e5_0) at (5.5, 5.93) {\tiny $u_5^{1}$};
\node[draw, circle, fill=orange!10, inner sep=0pt] (e5_1) at (5.19, 5.38) {\tiny $u_5^{2}$};
\node[draw, circle, fill=orange!10, inner sep=0pt] (e5_2) at (5.81, 5.38) {\tiny $u_5^{3}$};
\node[draw, dotted, very thick, circle, fill=yellow!10, opacity=0.4, inner sep=10pt] (5_top) at (5.5, 5.58) {};
\node[draw, circle, fill=orange!10, inner sep=0pt] (e6_0) at (5.5, 4.18) {\tiny $u_6^{1}$};
\node[draw, circle, fill=orange!10, inner sep=0pt] (e6_1) at (5.19, 3.63) {\tiny $u_6^{2}$};
\node[draw, circle, fill=orange!10, inner sep=0pt] (e6_2) at (5.81, 3.63) {\tiny $u_6^{3}$};
\node[draw, dotted, very thick, circle, fill=yellow!10, opacity=0.4, inner sep=10pt] (6_top) at (5.5, 3.83) {};
\begin{scope}[on background layer]
\draw[] (r) -- (s0);
\draw[] (r) -- (s1);
\draw[] (r) -- (s2);
\draw[] (r) -- (s3);
\draw[very thick, gray!70] (s0) -- (1_top);
\draw[very thick, gray!70] (s0) -- (2_top);
\draw[very thick, gray!70] (s0) -- (4_top);
\draw[very thick, gray!70] (s1) -- (3_top);
\draw[very thick, gray!70] (s1) -- (4_top);
\draw[very thick, gray!70] (s1) -- (6_top);
\draw[very thick, gray!70] (s2) -- (1_top);
\draw[very thick, gray!70] (s2) -- (2_top);
\draw[very thick, gray!70] (s2) -- (5_top);
\draw[very thick, gray!70] (s3) -- (2_top);
\draw[very thick, gray!70] (s3) -- (5_top);
\draw[very thick, gray!70] (s3) -- (6_top);
\end{scope}
\end{tikzpicture}

%% file: figures/poset_tree.tex
\begin{tikzpicture}[scale=0.7]
\node[draw, circle, fill=blue!10, ] (S_2) at (0*5.0, -0.0) {$S_2$};
\node[draw, circle, fill=blue!10, ] (S_4) at (1*5.0, -0.0) {$S_4$};
\node[draw, circle, fill=orange!10, inner sep=1pt] (u_3^{1}) at (1*5.0, -1.5) {\tiny $u_3^{1}$};
\node[draw, circle, fill=orange!10, inner sep=1pt] (u_3^{2}) at (1*5.0-0.4, -2.2) {\tiny $u_3^{2}$};
\node[draw, circle, fill=orange!10, inner sep=1pt] (u_3^{3}) at (1*5.0-0.8, -2.9) {\tiny $u_3^{3}$};
\node[draw, circle, fill=orange!10, inner sep=1pt] (u_4^{1}) at (1*5.0-0.6, -4.2) {\tiny $u_4^{1}$};
\node[draw, circle, fill=orange!10, inner sep=1pt] (u_4^{2}) at (1*5.0-1.0, -4.9) {\tiny $u_4^{2}$};
\node[draw, circle, fill=orange!10, inner sep=1pt] (u_4^{3}) at (1*5.0-1.4, -5.6) {\tiny $u_4^{3}$};
\node[draw, circle, fill=orange!10, inner sep=1pt] (u_6^{1}) at (1*5.0-1.7, -6.9) {\tiny $u_6^{1}$};
\node[draw, circle, fill=orange!10, inner sep=1pt] (u_6^{2}) at (1*5.0-2.2, -7.6) {\tiny $u_6^{2}$};
\node[draw, circle, fill=orange!10, inner sep=1pt] (u_6^{3}) at (1*5.0-2.6, -8.3) {\tiny $u_6^{3}$};
\node[draw, circle, fill=blue!10, ] (S_1) at (2*5.0, -0.0) {$S_1$};
\node[draw, circle, fill=orange!10, inner sep=1pt] (u_2^{1}) at (2*5.0, -1.5) {\tiny $u_2^{1}$};
\node[draw, circle, fill=orange!10, inner sep=1pt] (u_2^{2}) at (2*5.0-0.4, -2.2) {\tiny $u_2^{2}$};
\node[draw, circle, fill=orange!10, inner sep=1pt] (u_2^{3}) at (2*5.0-0.8, -2.9) {\tiny $u_2^{3}$};
\node[draw, circle, fill=orange!10, inner sep=1pt] (u_5^{1}) at (2*5.0-0.6, -4.2) {\tiny $u_5^{1}$};
\node[draw, circle, fill=orange!10, inner sep=1pt] (u_5^{2}) at (2*5.0-1.0, -4.9) {\tiny $u_5^{2}$};
\node[draw, circle, fill=orange!10, inner sep=1pt] (u_5^{3}) at (2*5.0-1.4, -5.6) {\tiny $u_5^{3}$};
\node[draw, circle, fill=blue!10, ] (S_3) at (3*5.0, -0.0) {$S_3$};
\node[draw, circle, fill=orange!10, inner sep=1pt] (u_1^{1}) at (3*5.0, -1.5) {\tiny $u_1^{1}$};
\node[draw, circle, fill=orange!10, inner sep=1pt] (u_1^{2}) at (3*5.0-0.4, -2.2) {\tiny $u_1^{2}$};
\node[draw, circle, fill=orange!10, inner sep=1pt] (u_1^{3}) at (3*5.0-0.8, -2.9) {\tiny $u_1^{3}$};
\begin{scope}[on background layer]
\draw[ rounded corners=3pt, thick, orange, fill=yellow!5, opacity=0.4] (5.6, -1.3) -- (4.9, -0.9) -- (3.6, -3.1) -- (4.3, -3.5) -- cycle;
\draw[ rounded corners=3pt, thick, orange, fill=yellow!5, opacity=0.4] (5.0, -4.0) -- (4.3, -3.6) -- (3.0, -5.8) -- (3.7, -6.2) -- cycle;
\draw[ rounded corners=3pt, thick, orange, fill=yellow!5, opacity=0.4] (3.9, -6.7) -- (3.2, -6.3) -- (1.8, -8.5) -- (2.5, -8.9) -- cycle;
\draw[ rounded corners=3pt, thick, orange, fill=yellow!5, opacity=0.4] (10.6, -1.3) -- (9.9, -0.9) -- (8.6, -3.1) -- (9.3, -3.5) -- cycle;
\draw[ rounded corners=3pt, thick, orange, fill=yellow!5, opacity=0.4] (10.0, -4.0) -- (9.3, -3.6) -- (8.0, -5.8) -- (8.7, -6.2) -- cycle;
\draw[ rounded corners=3pt, thick, orange, fill=yellow!5, opacity=0.4] (15.6, -1.3) -- (14.9, -0.9) -- (13.6, -3.1) -- (14.3, -3.5) -- cycle;
\draw[-Latex] (S_1) -- (S_3);
\draw[-Latex] (S_1) -- (u_1^{1});
\draw[-Latex] (S_1) -- (u_1^{2});
\draw[-Latex] (S_1) -- (u_1^{3});
\draw[-Latex] (S_2) -- (S_4);
\draw[-Latex] (S_2) -- (u_3^{1});
\draw[-Latex] (S_2) -- (u_3^{2});
\draw[-Latex] (S_2) -- (u_3^{3});
\draw[-Latex] (S_2) -- (u_4^{1});
\draw[-Latex] (S_2) -- (u_4^{2});
\draw[-Latex] (S_2) -- (u_4^{3});
\draw[-Latex] (S_2) -- (u_6^{1});
\draw[-Latex] (S_2) -- (u_6^{2});
\draw[-Latex] (S_2) -- (u_6^{3});
\draw[-Latex] (S_4) -- (S_1);
\draw[-Latex] (S_4) -- (u_2^{1});
\draw[-Latex] (S_4) -- (u_2^{2});
\draw[-Latex] (S_4) -- (u_2^{3});
\draw[-Latex] (S_4) -- (u_5^{1});
\draw[-Latex] (S_4) -- (u_5^{2});
\draw[-Latex] (S_4) -- (u_5^{3});
\end{scope}
\end{tikzpicture}

%% file: figures/shellings.tex
\begin{tikzpicture}[scale=0.7]
\node[draw, circle, fill=green!10] (r) at (0, 10.5 - 1.5*1.5) {$r$};
\node[draw, circle, fill=blue!10] (s0) at (2, 7*1.5- 1.5*0) {\small $s_1$};
\node[draw, circle, fill=blue!10] (s1) at (2, 7*1.5- 1.5*1) {\small $s_2$};
\node[draw, circle, fill=blue!10] (s2) at (2, 7*1.5- 1.5*2) {\small $s_3$};
\node[draw, circle, fill=blue!10] (s3) at (2, 7*1.5- 1.5*3) {\small $s_4$};
\node[draw, circle, fill=orange!10, inner sep=0pt] (e1_0) at (7.5, 12.93) {\tiny $u_1^{1}$};
\node[draw, circle, fill=orange!10, inner sep=0pt] (e1_1) at (7.19, 12.379999999999999) {\tiny $u_1^{2}$};
\node[draw, circle, fill=orange!10, inner sep=0pt] (e1_2) at (7.81, 12.379999999999999) {\tiny $u_1^{3}$};
\node[draw, circle, fill=orange!30, opacity=0.7, inner sep=2pt] (1_main) at (5.0, 12.579999999999998) {$\nou_1$};
\node[draw, dotted, very thick, circle, fill=yellow!10, opacity=0.4, inner sep=10pt] (1_top) at (7.5, 12.579999999999998) {};
\node[draw, circle, fill=orange!10, inner sep=0pt] (e2_0) at (7.5, 11.18) {\tiny $u_2^{1}$};
\node[draw, circle, fill=orange!10, inner sep=0pt] (e2_1) at (7.19, 10.629999999999999) {\tiny $u_2^{2}$};
\node[draw, circle, fill=orange!10, inner sep=0pt] (e2_2) at (7.81, 10.629999999999999) {\tiny $u_2^{3}$};
\node[draw, circle, fill=orange!30, opacity=0.7, inner sep=2pt] (2_main) at (5.0, 10.829999999999998) {$\nou_2$};
\node[draw, dotted, very thick, circle, fill=yellow!10, opacity=0.4, inner sep=10pt] (2_top) at (7.5, 10.829999999999998) {};
\node[draw, circle, fill=orange!10, inner sep=0pt] (e3_0) at (7.5, 9.43) {\tiny $u_3^{1}$};
\node[draw, circle, fill=orange!10, inner sep=0pt] (e3_1) at (7.19, 8.879999999999999) {\tiny $u_3^{2}$};
\node[draw, circle, fill=orange!10, inner sep=0pt] (e3_2) at (7.81, 8.879999999999999) {\tiny $u_3^{3}$};
\node[draw, circle, fill=orange!30, opacity=0.7, inner sep=2pt] (3_main) at (5.0, 9.079999999999998) {$\nou_3$};
\node[draw, dotted, very thick, circle, fill=yellow!10, opacity=0.4, inner sep=10pt] (3_top) at (7.5, 9.079999999999998) {};
\node[draw, circle, fill=orange!10, inner sep=0pt] (e4_0) at (7.5, 7.68) {\tiny $u_4^{1}$};
\node[draw, circle, fill=orange!10, inner sep=0pt] (e4_1) at (7.19, 7.13) {\tiny $u_4^{2}$};
\node[draw, circle, fill=orange!10, inner sep=0pt] (e4_2) at (7.81, 7.13) {\tiny $u_4^{3}$};
\node[draw, circle, fill=orange!30, opacity=0.7, inner sep=2pt] (4_main) at (5.0, 7.33) {$\nou_4$};
\node[draw, dotted, very thick, circle, fill=yellow!10, opacity=0.4, inner sep=10pt] (4_top) at (7.5, 7.33) {};
\node[draw, circle, fill=orange!10, inner sep=0pt] (e5_0) at (7.5, 5.93) {\tiny $u_5^{1}$};
\node[draw, circle, fill=orange!10, inner sep=0pt] (e5_1) at (7.19, 5.38) {\tiny $u_5^{2}$};
\node[draw, circle, fill=orange!10, inner sep=0pt] (e5_2) at (7.81, 5.38) {\tiny $u_5^{3}$};
\node[draw, circle, fill=orange!30, opacity=0.7, inner sep=2pt] (5_main) at (5.0, 5.58) {$\nou_5$};
\node[draw, dotted, very thick, circle, fill=yellow!10, opacity=0.4, inner sep=10pt] (5_top) at (7.5, 5.58) {};
\node[draw, circle, fill=orange!10, inner sep=0pt] (e6_0) at (7.5, 4.18) {\tiny $u_6^{1}$};
\node[draw, circle, fill=orange!10, inner sep=0pt] (e6_1) at (7.19, 3.63) {\tiny $u_6^{2}$};
\node[draw, circle, fill=orange!10, inner sep=0pt] (e6_2) at (7.81, 3.63) {\tiny $u_6^{3}$};
\node[draw, circle, fill=orange!30, opacity=0.7, inner sep=2pt] (6_main) at (5.0, 3.83) {$\nou_6$};
\node[draw, dotted, very thick, circle, fill=yellow!10, opacity=0.4, inner sep=10pt] (6_top) at (7.5, 3.83) {};
\begin{scope}[on background layer]
\draw[] (r) -- (s0);
\draw[] (r) -- (s1);
\draw[] (r) -- (s2);
\draw[] (r) -- (s3);
\draw[] (s0) -- (1_main);
\draw[very thick, gray!70] (1_main) -- (1_top);
\draw[] (s0) -- (2_main);
\draw[very thick, gray!70] (2_main) -- (2_top);
\draw[] (s0) -- (4_main);
\draw[very thick, gray!70] (4_main) -- (4_top);
\draw[] (s1) -- (3_main);
\draw[very thick, gray!70] (3_main) -- (3_top);
\draw[] (s1) -- (4_main);
\draw[very thick, gray!70] (4_main) -- (4_top);
\draw[] (s1) -- (6_main);
\draw[very thick, gray!70] (6_main) -- (6_top);
\draw[] (s2) -- (1_main);
\draw[very thick, gray!70] (1_main) -- (1_top);
\draw[] (s2) -- (2_main);
\draw[very thick, gray!70] (2_main) -- (2_top);
\draw[] (s2) -- (5_main);
\draw[very thick, gray!70] (5_main) -- (5_top);
\draw[] (s3) -- (2_main);
\draw[very thick, gray!70] (2_main) -- (2_top);
\draw[] (s3) -- (5_main);
\draw[very thick, gray!70] (5_main) -- (5_top);
\draw[] (s3) -- (6_main);
\draw[very thick, gray!70] (6_main) -- (6_top);
\end{scope}
\end{tikzpicture}

%% file: appendix.tex
\appendix

\section{Proof of~\Cref{lemma:tree-le}} \label{subsec:hook_length}
    By induction on $|T|$. For $|T| = 1$ the formula yields $1$, which is correct. For the inductive case, let $r$ be the root of $T$, and let $v_1, \dots, v_d$ be its out-neighbors. Due to the tree structure, for $1 \leq i \neq j \leq d$, all elements of $T_{v_i}$ are incomparable with the elements of $T_{v_j}$. Thus, all linear extensions of $T$ result from placing $r$ first, then taking a linear extension $L_i$ for each $T_{v_i}$, and now permuting all elements except for $r$ into an order $\pi$ such that projecting $\pi$ to the elements of $T_{v_i}$ leaves them in the order $L_i$. Therefore,
    \begin{align*}
         \#\mathsf{LE}(T) &= \binom{|T| - 1}{|T_{v_1}|, |T_{v_2}|, \dots, |T_{v_d}|} \prod_{i=1}^d \#\mathsf{LE}(T_{{v_i}})\\
         &=  \frac{(|T|-1)!}{ |T_{v_1}|! \cdot |T_{v_2}|! \cdots |T_{v_d}|!}  \prod_{i=1}^d \frac{|T_{v_i}|!}{\prod_{v \in V(T_{v_i})} |T_v|} \tag{Inductive Hypothesis} \\
         &= \frac{(|T| - 1)!}{\prod_{i=1}^d \prod_{v \in V(T_{v_i})} |T_v|}\\
         &= \frac{(|T| - 1)!}{\prod_{v \in V \setminus \{r \}} |T_v|} \\ &= \frac{|T|!}{\prod_{v \in V} |T_v|}.
    \end{align*}

\section{Proof of~\Cref{claim:omega-xc}} \label{subsec:claim_shellings}
    For ease of notation, let $X = \{S_1, \dots, S_h\}$. Then, let $E(X)$ be the set of edges involving $X$:
    \[
    E(X) := \{ \{r, \ns_i\} \mid S_i \in X\} \cup \bigcup_{i=1}^h \{ \{\ns_i, \nou_j\} \mid S_i \in X, u_j \in S_i\}.
    \]
    Note that $|E(X)| = 4h$. Let us write $\Gamma_X$ to denote the set of shellings rooted at $r$ of the graph induced by the edges $E(X)$. Observe that each $\sigma \in \omxc(X)$ consists of an element of $\Gamma_X$ followed by some permutation of the edges $E' \setminus E(X)$. Reciprocally, given any shelling $\tau \in \Gamma_X$, we can follow it by any of the $(4n-4h)!$ possible permutations of the edges $E' \setminus E(X)$ and obtain a valid $\sigma \in \omxc(X)$. Note also that, given $\sigma \in \omxc(X)$, the terms $C_{4n-4h+t, \sigma}$ for $t \in [4h]$ only depend on the first $4h$ edges of $\sigma$. We will thus use the notation $C_{4n-4h+t, \tau}$, where $\tau \in \Gamma_X$. Then,
    \[
    \sum_{\sigma \in\omxc(X)} \prod_{t=1}^{4h} \frac{1}{C_{4n-4h+t, \sigma}} = (4n-4h)! \sum_{\tau \in\Gamma_X} \prod_{t=1}^{4h} \frac{1}{C_{4n-4h+t, \tau}}.
    \]
    Therefore, our claim reduces to proving that
      \begin{equation}\label{eq:claim-core}
          \sum_{\tau \in \Gamma_X} \prod_{t=1}^{4h} \frac{1}{C_{4n-4h+t, \tau}} = 3^{-h},
      \end{equation}
      We prove~equation~\eqref{eq:claim-core} through a probabilistic interpretation.

      For every $i \in \{1, \dots, h\}$, let $r_i$ be the edge $\{r, \ns_i\}$. For each set $S_i \in X$, let $u_i^1, u_i^2, u_i^3$ be an enumeration of its $3$ elements. Then, for $1 \leq i \leq h$ and $1 \leq j \leq 3$ let $g_{i}^j$ denote the edge $\{\ns_i, \nou_i^j\}$, where $\nou_i^j$ is the vertex corresponding to the variable $u_i^j$.
      Observe that $\tau$ is simply a permutation of the $4h$ edges $r_1, \dots, r_h, g_1^1, \dots, g_h^3$ where each $g_{i}^j$ must appear after $r_i$. 
    Now, consider the following random process to generate a permutation of $E(X)$:
    \begin{enumerate}
    
        \item Initialize an empty sequence $\pi_0$, a counter $Q_0 \gets 3h$, a set $A_0 \gets \varnothing$, and $R_0 \gets \{r_1, \ldots, r_h\}$.
        \item Repeat the following for $t \in \{1, \dots, 4h\}$ : 
        \begin{enumerate}
            \item Sample an element $e_t$ from $R_{t-1} \cup A_{t-1}$ according to the following distribution:
            \[
            \Pr[e_t = x] = \begin{cases}
                3/Q_{t-1} & \text{if } x \in R_{t-1}\\
                1/Q_{t-1} & \text{if } x \in A_{t-1}.
            \end{cases}
            \]
            \item $\pi_t \gets \pi_{t-1} \circ (e_t)$
            \item If $e_t = r_i \in R_{t-1}$ for some $i$, then set\\ $R_t \gets R_{t-1} \setminus \{e_{t}\}$,  $Q_t \gets Q_{t-1}$, and $A_t \gets A_{t-1} \cup \{ g_i^1, g_i^2, g_i^3\}$. 
            \item If $e_t \in A_{t-1}$, then set  $R_t \gets R_{t-1}$,  $Q_{t} \gets Q_{t-1} - 1$, and $A_t \gets A_{t-1} \setminus \{e_t\}$.
        \end{enumerate}
        \item Output $\pi_{4h}$.
    \end{enumerate}
    Let us first note that step (2.a) is well defined due to the following invariant: for every $t$, we have that $3|R_t| + |A_t| = Q_t$, which holds directly for $t = 0$, and then inductively for $t \geq 1$. If $e_t \in R_{t-1}$, we have 
    \[
    3|R_t| + |A_t| = 3(|R_{t-1}|- 1) + |A_{t-1}| + 3 \overset{\mathrm{I.H.}}{=} Q_{t-1} = Q_t.
    \]
    On the other hand, if $e_t \in A_{t-1}$, we have
    \[
    3|R_t| + |A_t| = 3|R_{t-1}| + |A_{t-1}| - 1 \overset{\mathrm{I.H.}}{=} Q_{t-1} - 1 = Q_t.
    \]
    As the invariant holds, we conclude indeed 
    \[
    \sum_{x \in R_{t-1} \cup A_{t-1}} \Pr[e_t = x] = |R_{t-1}| \cdot \frac{3}{Q_{t-1}} + |A_{t-1}| \cdot \frac{1}{Q_{t-1}} = \frac{3|R_{t-1}| + |A_{t-1}|}{Q_{t-1}} = 1.
    \]
    Note as well that the previous process guarantees that each $g_i^j$ appears after $r_i$ in $\pi_{4h}$, since $g_i^j$ is only added to $A_{t}$ once $r_i$ has been added to $\pi_t$. Thus,  if we let $\mathcal{D}$ denote the previous random process, then the support of $\mathcal{D}$ is exactly $\Gamma_X$ and therefore
    \begin{equation}\label{eq:distr}
    \sum_{\tau \in \Gamma_X} \Pr_{\mathcal{D}}[\pi_{4h} = \tau] = 1.
      \end{equation}

    Using equation~\eqref{eq:distr}, we immediately  get~equation~\eqref{eq:claim-core} if we  prove  that for every $\tau \in \Gamma_X$,
    \begin{equation}\label{eq:prob}
        \Pr_{\mathcal{D}}[\pi_{4h} = \tau] = 3^h \prod_{t=1}^{4h} \frac{1}{C_{4n-4h+t, \tau}}.
    \end{equation}

    Thus, our proof reduces to showing~equation~\eqref{eq:prob}.
    First, we have
    \[
     \Pr_{\mathcal{D}}[\pi_{4h} = \tau] =  \prod_{t=1}^{4h} \Pr_{\mathcal{D}}\left[\pi_{4h}(t) = \tau(t) \, \Big|\, \bigwedge_{j=1}^{t-1 }\pi_{4h}(j) = \tau(j)\right].
    \]
Observe that, for $\tau \in \Gamma_X$ and $t \in \{1, \dots, 4h\}$, the value $C_{4n+1-t, \tau}$ only depends on the first $t-1$ edges of $\tau$. We now claim that, if the first $t \geq 0$ steps of the random process coincide with $\tau \in \Gamma_X$, then it must be the case that $Q_{t} = C_{4n-t, \tau}$. Let $I_\tau \subseteq \{1, \dots, 4h\}$ be the set of positions $i$ for which $\tau(i)$ is an edge of the form $\{\ns_i, \nou_i^j\}$, with $S_i \in X$ and $u_i^j \in S_i$. Note that, since $\tau$ induces the exact cover $X$, for each $u_j \in U$ there is exactly one edge in any permutation of $E(X)$ containing $\nou_j$, from where 
$C_{4n-t, \tau} = |I_\tau \cap \{t+1, \dots, 4h\}|$. Now observe that initially $Q_0 = 3h = |I_\tau|$, and we set $Q_{t} \gets Q_{t-1} - 1$ exactly when $t \in I_\tau$. Therefore, our claim follows directly by induction.

Now observe that if $t \in I_\tau$, then we have
     \[
     \Pr_{\mathcal{D}}\left[\pi_{4h}(t) = \tau(t) \, \Big|\, \bigwedge_{j=1}^{t-1 }\pi_{4h}(j) = \tau(j)\right] = \frac{1}{Q_{t-1}} = \frac{1}{C_{4n+1-t, \tau}},
   \]
   and if $t \not\in I_\tau$, then 
     \[
     \Pr_{\mathcal{D}}\left[\pi_{4h}(t) = \tau(t) \, \Big|\, \bigwedge_{j=1}^{t-1 }\pi_{4h}(j) = \tau(j)\right] = \frac{3}{Q_{t-1}} = \frac{3}{C_{4n+1-t, \tau}}.
   \]
   Because $\left|\{1, \dots, 4h\} \setminus I_\tau \right| = h$, we have that
    \[
     \Pr_{\mathcal{D}}[\pi_{4h} = \tau] =  \prod_{t=1}^{4h} \Pr_{\mathcal{D}}\left[\pi_{4h}(t) = \tau(t) \, \Big|\, \bigwedge_{j=1}^{t-1 }\pi_{4h}(j) = \tau(j)\right] = 3^{h} \prod_{t=1}^{4h} \frac{1}{C_{4n-4h+t, \tau}},
    \]
which is exactly~\eqref{eq:prob}. We thus conclude the proof.

\section{Proof of~\Cref{claim:nfree_claim}} \label{subsec:claim_nfree}
For ease of notation, let $\mathcal{S} \setminus X = \{S_1, \dots, S_{n-h}\}$.
Note that the set $\omxc(X)$ is a cartesian product of linear extensions of the edge poset of $E' \setminus E_X$ at the first $4n-4h$ positions and linear extensions of the edge poset of $E_X$ at the last $4h$ position. The edge poset of $E_X$ decomposes into $h$ disjoint posets of $4$ elements, each one of them consisting of edges of the form $(r, s_i)$, $(s_i, \nou_{j_1})$, 
$(s_i, \nou_{j_2})$,
$(s_i, \nou_{j_3})$, where $S_i = \{u_{j_1}, u_{j_2}, u_{j_3}\}$. To have a valid linear extension of such a poset with 4 elements, the only condition that must be satisfied is for $(r, s_i)$ to be its first element. Each of those $h$ posets has $3!$ linear extensions. Therefore, the number of linear extensions of the edge poset of $E_X$ is
$$\frac{(4h)!}{(4!)^h} \cdot (3!)^h = \frac{(4h)!}{4^h}.$$

Let us write $\Gamma_X$ to denote the set of linear extensions of the edge poset of $E' \setminus E_X$. Note that, given $\sigma \in \omxc(X)$, the terms $C_{i, \sigma}$ for $i \in [4h+1, \dots, 4n]$ only depend on the first $4n-4h$ edges of $\sigma$. Therefore, we will use the notation $C_{i, \tau}$, where $\tau \in \Gamma_X$. We thus have
    \[
    \sum_{\sigma \in\omxc(X)} \prod_{i=4h+1}^{4n} \frac{1}{i-C_{i, \sigma}} = \frac{(4h)!}{4^h} \sum_{\tau \in\Gamma_X} \prod_{i=4h+1}^{4n} \frac{1}{i-C_{i, \tau}}.
    \]
Therefore, our claim reduces to proving that
      \begin{equation}\label{eq:claim_core_le}
          \sum_{\tau \in \Gamma_X} \prod_{i=4h+1}^{4n} \frac{1}{i-C_{i, \tau}} = \frac{1}{3^{n-h}}.
      \end{equation}
As we did for the case of shellings, we prove~equation~\eqref{eq:claim_core_le} through a probabilistic interpretation.

For every $i \in \{1, \dots, n-h\}$, let $r_i$ be the edge $(r, \ns_i)$. For each set $S_i \in \mathcal{S} \setminus X$, let $u_i^1, u_i^2, u_i^3$ be an enumeration of its $3$ elements. For $1 \leq k \leq n-h$ and $1 \leq j \leq 3$ let $g_{k}^j$ denote the edge $(\ns_k, \nou_k^j)$, where $\nou_k^j$ is the vertex corresponding to the variable $u_k^j$.
Observe that $\tau$ is simply a permutation of the $4n-4h$ edges $r_1, \dots, r_{n-h}, g_1^1, \dots, g_{n-h}^3$ where each $g_{k}^j$ must appear after $r_k$. Consider the following random process $\mathcal{D}$ to generate a permutation of $E' \setminus E_X$:
    \begin{enumerate}
    
        \item Initialize an empty sequence $\pi_0$, a counter $Q_0 \gets 3n-3h$, $A_0 \gets \varnothing$, and ${R_0 \gets \{r_1, \ldots, r_{n-h}\}}$.
        \item Repeat the following for $t \in \{1, \dots, 4n-4h\}$ : 
        \begin{enumerate}
            \item Sample an element $e_t$ from $R_{t-1} \cup A_{t-1}$ according to the following distribution:
            \[
            \Pr[e_t = x] = \begin{cases}
                3/Q_{t-1} & \text{if } x \in R_{t-1}\\
                1/Q_{t-1} & \text{if } x \in A_{t-1}.
            \end{cases}
            \]
            \item $\pi_t \gets \pi_{t-1} \circ (e_t)$
            \item If $e_t = r_k \in R_{t-1}$ for some $k$, then set\\ $R_t \gets R_{t-1} \setminus \{e_{t}\}$, $Q_t \gets Q_{t-1}$, and $A_t \gets A_{t-1} \cup \{ g_k^1, g_k^2, g_k^3\}$. 
            \item If $e_t \in A_{t-1}$, then set $R_t \gets R_{t-1}$,  $Q_{t} \gets Q_{t-1} - 1$, and $A_t \gets A_{t-1} \setminus \{e_t\}$.
        \end{enumerate}
        \item Output $\pi_{4n-4h}$.
    \end{enumerate}
    Step (2.a) is well defined due to the following invariant: for every $t$, we have that $3|R_t| + |A_t| = Q_t$, which holds directly for $t = 0$, and then inductively for $t \geq 1$. If $e_t \in R_{t-1}$, we have 
    \[
    3|R_t| + |A_t| = 3(|R_{t-1}|- 1) + |A_{t-1}| + 3 \overset{\mathrm{I.H.}}{=} Q_{t-1} = Q_t.
    \]
    On the other hand, if $e_t \in A_{t-1}$, we have
    \[
    3|R_t| + |A_t| = 3|R_{t-1}| + |A_{t-1}| - 1 \overset{\mathrm{I.H.}}{=} Q_{t-1} - 1 = Q_t.
    \]
    As the invariant holds, we conclude that
    \[
    \sum_{x \in R_{t-1} \cup A_{t-1}} \Pr[e_t = x] = |R_{t-1}| \cdot \frac{3}{Q_{t-1}} + |A_{t-1}| \cdot \frac{1}{Q_{t-1}} = \frac{3|R_{t-1}| + |A_{t-1}|}{Q_{t-1}} = 1.
    \]
    The process $\mathcal{D}$ guarantees that each $g_i^j$ appears after $r_i$ in $\pi_{4n-4h}$, since $g_i^j$ is only added to $A_{t}$ once $r_i$ has been added to $\pi_t$. Thus, the support of $\mathcal{D}$ is exactly $\Gamma_X$ and therefore
    \begin{equation}\label{eq:distr_le}
    \sum_{\tau \in \Gamma_X} \Pr_{\mathcal{D}}[\pi_{4n-4h} = \tau] = 1.
      \end{equation}
    Using equation~\eqref{eq:distr_le}, we immediately  get~equation~\eqref{eq:claim_core_le} if we  prove  that for every $\tau \in \Gamma_X$,
    \begin{equation}\label{eq:prob_le}
        \Pr_{\mathcal{D}}[\pi_{4n-4h} = \tau] = 3^{n-h} \prod_{i=4h+1}^{4n} \frac{1}{i-C_{i, \tau}}.
    \end{equation}
Thus, our proof reduces to showing~equation~\eqref{eq:prob_le}.
    First, we have
    \[
     \Pr_{\mathcal{D}}[\pi_{4n-4h} = \tau] =  \prod_{t=1}^{4n-4h} \Pr_{\mathcal{D}}\left[\pi_{4n-4h}(t) = \tau(t) \, \Big|\, \bigwedge_{j=1}^{t-1 }\pi_{4n-4h}(j) = \tau(j)\right].
    \]
We now claim that, if the first $t \geq 0$ steps of $\mathcal{D}$ coincide with $\tau \in \Gamma_X$, then it must be the case that
\begin{equation} \label{eq_rec_qt}
Q_{t} = 4n-t - C_{4n-t, \tau}.
\end{equation}
Let $I_\tau \subseteq \{1, \dots, 4n-4h\}$ be the set of positions $i$ for which $\tau(i)$ is an edge of the form $r_k$. Note that, since the final edge set $E_X$ corresponds to an exact cover, we have that $C_{i+1, \tau} - C_{i, \tau} = 1$ for $i \in \{4h+1, \dots, 4n-1\}$ exactly when $\tau(4n-i) \in I_\tau$.
Observe that initially \[Q_0 = 3n-3h = 4n - 0 - (n+3h) = 4n - 0 - C_{4n-0, \tau}.\] 
Now there are two cases to consider. The first case is when $t \in I_\tau$, and then we have that
$$Q_t = Q_{t-1} \overset{\mathrm{I.H.}}{=} 4n - (t-1) - C_{4n+1-t, \tau} = 4n - (t-1) - (C_{4n-t, \tau}+1) = 4n - t - C_{4n-t, \tau}.$$
The second case is when $t \notin I_\tau$, and then we have that
$$Q_t = Q_{t-1} -1 \overset{\mathrm{I.H.}}{=} 4n - (t-1) - C_{4n+1-t, \tau} - 1= 4n - (t-1) - C_{4n-t, \tau} - 1 = 4n - t - C_{4n-t, \tau}.$$
Therefore, ~equation~\eqref{eq_rec_qt} follows by induction. 

Now observe that if $t \in I_\tau$, then we have
     \[
     \Pr_{\mathcal{D}}\left[\pi_{4n-4h}(t) = \tau(t) \, \Big|\, \bigwedge_{j=1}^{t-1 }\pi_{4n-4h}(j) = \tau(j)\right] = \frac{3}{Q_{t-1}} = \frac{3}{4n+1-t-C_{4n+1-t, \tau}},
   \]
   and if $t \not\in I_\tau$, then 
     \[
     \Pr_{\mathcal{D}}\left[\pi_{4n-4h}(t) = \tau(t) \, \Big|\, \bigwedge_{j=1}^{t-1 }\pi_{4n-4h}(j) = \tau(j)\right] = \frac{1}{Q_{t-1}} = \frac{1}{4n+1-t-C_{4n+1-t, \tau}}.
   \]
   Because $\left| I_\tau \right|= n-h$, we have that
    \begin{align*}
     \Pr_{\mathcal{D}}[\pi_{4n-4h} = \tau] &=  \prod_{t=1}^{4n-4h} \Pr_{\mathcal{D}}\left[\pi_{4n-4h}(t) = \tau(t) \, \Big|\, \bigwedge_{j=1}^{t-1 }\pi_{4n-4h}(j) = \tau(j)\right] \\ &= 3^{n-h} \prod_{t=1}^{4n-4h} \frac{1}{4n+1-t-C_{4n+1-t, \tau}} \\
     &= 3^{n-h} \prod_{i=4h+1}^{4n} \frac{1}{i-C_{i, \tau}},
    \end{align*}
which is exactly~\eqref{eq:prob_le}. We thus conclude the proof.